\newtheorem{theorem}{Theorem}[section]
\newtheorem{theorem*}{Theorem}
\newtheorem{lemma}[theorem]{Lemma}
\newtheorem{proposition}[theorem]{Proposition}
\newtheorem{proposition*}[theorem*]{Proposition}
\newtheorem{corollary}[theorem]{Corollary}
\newtheorem{corollary*}[theorem*]{Corollary}
\newtheorem{definition}[theorem]{Definition}
\theoremstyle{remark}
\newtheorem{remark*}[theorem*]{Remark}
\newtheorem{note*}[theorem*]{Note}
\newcommand{\EE}{{\mathbb E}}
\newcommand{\PP}{{\mathbb P}}
\newcommand{\RR}{{\mathbb R}}
\title{On profitability of selfish mining}
\subjclass[2010]{68M01, 60G40, 91A60.}
\keywords{Bitcoin, blockchain, selfish mining, proof-of-work.}
\author[C. Grunspan]{Cyril Grunspan}
\address{Cyril Grunspan\newline{}\indent L\'eonard de Vinci P\^ole Univ, Finance Lab, Labex R\'efi\newline{}\indent Paris, France, }
\email{cyril.grunspan@devinci.fr}
\author[R. P\'{e}rez-Marco]{Ricardo P\'{e}rez-Marco}
\address{Ricardo P\'{e}rez-Marco\newline{}\indent CNRS, IMJ-PRG, Labex R\'efi \newline{}\indent Paris, France}
\email{ricardo.perez.marco@gmail.com}
\address{\tiny Author's Bitcoin Beer Address (ABBA)\footnote{\tiny Send some bitcoins to support our research at the pub.}:\newline{}\indent 1KrqVxqQFyUY9WuWcR5EHGVvhCS841LPLn} 
\address{\includegraphics[scale=0.5]{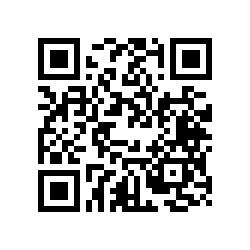}}
\begin{document}

\begin{abstract}
  We review the so called selfish mining strategy in the Bitcoin network and compare its profitability to honest mining.
  We build a rigorous profitability model for repetition games. 
  The time analysis of the attack has been ignored in the previous literature based on a Markov model,
  but is critical. Using martingale's techniques and Doob Stopping Time Theorem we compute the expected duration of attack cycles. 
  We discover a remarkable property of the bitcoin network:  no strategy is more profitable 
  than the honest strategy before a difficulty adjustment. So selfish mining can only become profitable afterwards, thus it is 
  an attack on the difficulty adjustment algorithm. We propose an improvement of  
  Bitcoin protocol making it immune to selfish mining attacks. We also study miner's attraction to selfish mining pools. 
  We calculate the expected duration time before profit for the selfish miner, a computation that is out of reach by the 
  previous Markov models.
\end{abstract}

{\maketitle}

\section{Introduction}

The stability of Bitcoin protocol \cite{N08} relies on rules aligned with self-interest of participants in the network. One 
rule is that miners make public blocks as soon as they are validated. 
``Selfish Mining'' is a deviant mining strategy described in {\cite{ES14}} where a miner 
withholds validated blocks and releases them with a well timed strategy designed to invalidate the maximum number of blocks 
mined by the rest of the network. 


Other researchers have proposed other 
selfish mining strategies which are supposed to be ``optimal'' {\cite{SSZ16}}. The selfish mining strategy is 
presented in courses and textbooks on Bitcoin such  as  {\cite{BFGMN16}} 
or {\cite{W17}}.


All these articles do not make a proper analysis of the profitability of the attack compared to honest mining, and, more critically, do 
ignore time considerations. More precisely, 
the Markov model used in these papers is limited by inception since it does not incorporate an analysis of 
the time duration of the attack. The main goal of this article 
is to carry out a proper analysis of profitability that is lacking in the literature. It turns out that without difficulty adjustments 
the strategy is unsound.

\medskip

\section{Selfish mining strategy.} \label{sec_SM}

We describe the selfish mining strategy presented in  {\cite{ES14}}. The selfish miner attack starts by validating 
and not broadcasting a block, then continuing mining secretly on top of this block. Then he proceeds as follows:

\begin{enumerate}
  \item \label{case1} If the advance of the selfish miner is $1$ block and the
  honest miners discover a block then the selfish miner broadcasts immediately
  the block he has mined secretly. A competition then follows. The selfish miner mines on top of his now public block. The selfish miner 
  is sufficiently well connected with the rest of the network
  so that a fraction $0\leq\gamma \leq 1$ of the honest network accepts his block
  proposal and starts mining with him on top of it.
  

  \item \label{compet}If the advance of the selfish miner is $2$ blocks and
  the honest miners discover a block, then the selfish miner broadcasts
  immediately all the blocks he has mined secretly. Then, the whole network
  switches to his fork.
  
  
  \item \label{case3} If the advance of the selfish miner is greater than $2$, as soon as the honest miners discovers one block, 
  then the selfish miner makes public one more block releasing a subchain that ends with that block that enters into competition 
  with the new honest block \footnote{It is not enough to release only block. Line 26 of Algorithm 1 in \cite{ES14} is not accurate.}. 
  The selfish miner keeps mining on top of his secret chain.
  
  
  \item Except in (1), the selfish miner keeps on mining secretly on top of his
  fork.
\end{enumerate}

Note that if the advance of the selfish miner is greater than $2$, then at some
point his advance will be equal to $2$ (because we assume his
hashrate to be less than $50\%$, or other more efficient attacks are possible) and 
then, according to the second point, the whole
network ends up adopting the fork proposed by the selfish miner.
Therefore, the blocks made public by the selfish miner when his advance is
greater than $2$ always end up being accepted by the network. Point (3) is somewhat 
irrelevant since the only thing that counts when the selfish miner takes an advantage is to force his validated blocks in the public 
blockchain. He can ignore block validations by honest miners, except when the advance is only $2$ and then release the whole secret fork.

In  {\cite{ES14}} it is assumed that the fraction $\gamma$ stays always constant.
This is not accurate since $\gamma$ depends on the timing of the discovery of a new block mined 
by the network, and therefore cannot be constant.
But it is necessary to assume it constant for the sake of the Markov chain model presented in {\cite{ES14}} 
and these authors made such assumption.  
The analysis of the necessary capital to 
reach a stable regime is not done. But, more importantly, the difference in profitability by deviating from 
the Bitcoin protocol are not properly accounted. This is fundamental in order to compute the profitability of such a rogue 
strategy. The time analysis is also critical to estimate the profitability and is ignored.

\medskip

\section{Profitability of selfish vs. honest mining.}

\subsection{Cost of mining.}
The key idea in order to evaluate properly the profitability of selfish mining is to compare it to the profitability 
of honest mining. We assume that miners are in the mining business because the operating cost (equipment, energy power, salaries, hedge on the volatility of
bitcoin exchange rate, etc)
are compensated by the block reward of newly generated bitcoins plus transaction 
fees\footnote{There may be other non-economic incentives, as transforming non-internationally 
circulating currency into bitcoins, that we cannot consider.}. 

In the article, we assume that the selfish miner's hashrate is constant. His strategy determines if he should  
release a block as soon as he has just validated one or if he should keep mining secretly on a private fork and 
release it at the appropriate timing. Whatever the strategy is, the machines of the miner (in this case ASICs)  
operate at full capacity repeating non-stop the same hashing calculations. Thus the 
cost of mining per unit time of any block withholding strategy is the same as the one for the honest strategy .
 
If $c_S(t)$ (resp. $c_H(t)$) is the random variable measuring the instantaneous cost of mining per unit time for the selfish (resp. honest) mining strategy
at instant $t$. We have just proved that $c_S(t)=c_H(t)=c(t)$. For the short 
duration of the attack we can assume that market conditions are stable, so that the random variables $(c(t))_t$ are i.i.d. 
We also assume that the random variables $(c(t))$ are integrable on $t\in \RR$ since $\int_{t_1}^{t_2} c(t) \, dt$ is the random variable giving the cost 
during the period $[t_1,t_2]$ that is finite.
Note that the cost $c(t)$ is not only independent of the strategy but also
of the block mined being accepted or not by the network. Miners get a profit in bitcoin and have expenses in 
local currencies, to pay for equipment, salaries, etc
which makes mining a risky business. Costs for hedging against this risk is independent of the strategy and is integrated in $c(t)$. 
The comparative analysis we carry out is independent of the bitcoin exchange rate fluctuations, that are independent of the mining 
strategy (therefore we can evaluate $c(t)$ in bitcoin per second for the instantaneous exchange rate). 

\medskip

\subsection{Profit and Loss.} Our goal is to evaluate the profit and loss (PnL) and compare the PnL of selfish mining 
and honest mining. Profit and loss
is the revenue $R$ minus the costs $C$,
$$
PnL=R-C \ .
$$
The revenue in the mining business comes from the block reward $b>0$ that includes the bounty $b_0$ on newly created 
bitcoins\footnote{At this time, $b_0=12.5 \text{ BTC}$.} plus the 
total amount of fees $f$ of the transactions included in the block, once this block has been accepted by the official blockchain.
The amount of fees $f$ is different from one block to another. 
We assume that we are not near a halving, so that the reward $b_0$ stays constant. 
As we prove below, we can assume for the profitability analysis that $b>0$ is constant and equal to its expected value $b=b_0+\EE[f]$.
Double spends may also increase the profitability of a dishonest strategy.

It is important to note that, for any business, is the $PnL$ per unit of time, and not the $PnL$ per block solved or accepted by the network
that counts. It is also important to observe that PnL per block or per unit of time are not equivalent since the 
strategy employed does delay the speed of validation of blocks in 
the network.

\subsection{Profit and Loss per unit of time.} A non-stop attack strategy, as selfish mining, consists in a consecutive sequence of ``attack cycles''. 

In the 
case of selfish mining it starts when both selfish miners and honest miners are working on top of the same public blockchain. The selfish miners
aim to get an advance of their secretly mined blockchain. If, at the beginning, the honest miners find the new block, then a new cycle starts. If
the selfish miners succeed in building an advantage, then the cycles lasts until the honest miners catch-up and force the selfish miners to release
all of their secretly mined blocks. Then a new cycle starts. 

\medskip

For this type of strategies of games with repetition,  the asymptotic $PnL$ per unit of time, $PnLt_\infty$, can be evaluated. 
This is the content of the following Theorem. 

\bigskip

\begin{theorem}[Profitability of an attack]\label{thm_profit}
 Let $R$, $C$ and $T$ be the random variables corresponding respectively to the revenue, cost and duration of a cycle for a repetition strategy. We assume 
 these random variables to be integrable. Then in the long run the $PnL$ per unit 
 of time of the repeated strategy is
 $$
 PnLt_\infty =\frac{\EE[R]-\EE[C]}{\EE[T]} \ .
 $$
 We call $PnLt_\infty$ the profitability of the strategy.
\end{theorem}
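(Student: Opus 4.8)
The plan is to model the repeated strategy as a sequence of independent, identically distributed attack cycles and to invoke the renewal--reward theorem, which I would reconstruct here from the strong law of large numbers (SLLN) together with the elementary renewal theorem. First I would index the cycles by $n\geq 1$ and let $(R_n,C_n,T_n)$ denote the revenue, cost, and duration of the $n$-th cycle. By the description of the attack (each cycle restarts from the common public blockchain, and market conditions are assumed stable), these triples are i.i.d.\ copies of $(R,C,T)$, all integrable, with $\EE[T]>0$. I set the cumulative quantities $\mathcal{R}_N=\sum_{n=1}^N R_n$, $\mathcal{C}_N=\sum_{n=1}^N C_n$, and the cycle-completion times $\Theta_N=\sum_{n=1}^N T_n$.

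The core computation is carried out along the random times $\Theta_N$ at which exactly $N$ cycles have completed. There the accumulated PnL equals $\mathcal{R}_N-\mathcal{C}_N$, so
$$
\frac{PnL(\Theta_N)}{\Theta_N}=\frac{\mathcal{R}_N-\mathcal{C}_N}{\Theta_N}=\frac{\tfrac{1}{N}(\mathcal{R}_N-\mathcal{C}_N)}{\tfrac{1}{N}\Theta_N}\ .
$$
By the SLLN the numerator converges almost surely to $\EE[R]-\EE[C]$ and the denominator to $\EE[T]$; since $\EE[T]>0$, the ratio converges almost surely to $(\EE[R]-\EE[C])/\EE[T]$, which is the asserted value of $PnLt_\infty$.

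It then remains to upgrade this from the discrete completion times $\Theta_N$ to the limit over all real $t\to\infty$. I would introduce the renewal counting process $N(t)=\max\{N:\Theta_N\leq t\}$, noting that $N(t)\to\infty$ almost surely and, by the elementary renewal theorem, $N(t)/t\to 1/\EE[T]$ almost surely. For $t\in[\Theta_{N(t)},\Theta_{N(t)+1})$ the PnL accumulated up to time $t$ differs from $\mathcal{R}_{N(t)}-\mathcal{C}_{N(t)}$ only by the revenue and cost generated inside the single cycle still in progress. Writing $PnL(t)/t=\big((\mathcal{R}_{N(t)}-\mathcal{C}_{N(t)})/N(t)\big)\cdot\big(N(t)/t\big)$ plus this boundary term, the first two factors converge by the previous paragraph (evaluated along the random index $N(t)\to\infty$), and the boundary term, being bounded by the increments of one cycle divided by $t\geq\Theta_{N(t)}$, tends to $0$. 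This yields the same almost-sure limit for $PnL(t)/t$.

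I expect the main obstacle to be precisely this last step: rigorously controlling the partial-cycle boundary contribution and justifying that the SLLN ratio may be read off along the \emph{random} index $N(t)$ rather than a deterministic subsequence. Both points are resolved by the almost-sure divergence $N(t)\to\infty$ combined with the integrability of $R$ and $C$, which forces the per-cycle increments — and hence the in-progress contribution divided by $t$ — to vanish in the limit.
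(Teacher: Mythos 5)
Your proof is correct, and its first half is exactly the paper's proof: the paper defines $PnL_n$ as the PnL per unit time after $n$ completed cycles, writes it as a ratio of empirical means over the i.i.d.\ triples $(R_i,C_i,T_i)$, and applies the strong law of large numbers to get the almost-sure limit $(\EE[R]-\EE[C])/\EE[T]$ — nothing more. Where you go beyond the paper is the second half: the paper implicitly identifies ``the long-run PnL per unit of time'' with the limit along the cycle-completion times $\Theta_N$, whereas you upgrade this to the genuine continuous-time limit of $PnL(t)/t$ as $t\to\infty$, using the renewal counting process $N(t)$, the almost-sure renewal limit $N(t)/t\to 1/\EE[T]$, and a control of the in-progress-cycle contribution. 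That extension is sound: the key fact you need — that for i.i.d.\ integrable increments $X_n$ one has $X_n/n\to 0$ almost surely (hence $X_{N(t)+1}/t\to 0$ since $N(t)/t$ converges to a positive constant) — follows from Borel--Cantelli, and you correctly flag it as the crux; to make it airtight you should also state explicitly that $\EE[T]>0$ and that the partial revenue and cost accrued inside an unfinished cycle are dominated by that cycle's totals (automatic for the cost $\int_0^{\cdot} c(t)\,dt$ with $c\geq 0$, and for selfish mining the revenue is only credited at cycle end, so the partial revenue is even zero). In short, your argument buys a strictly stronger and more honest interpretation of the statement — convergence over all real times rather than along the random subsequence of cycle boundaries — at the modest cost of the renewal-theoretic bookkeeping that the paper silently skips.
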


\begin{proof}
Let $R_i$, $C_i$ and $T_i$, be the corresponding values for the $i$-cycle. The $(R_i)$ (resp. $(C_i)$, $(T_i)$) are i.i.d. random variables.
The $PnL_n$ after $n$-cycles is given by
$$
PnL_n = \frac{\sum_{i=1}^n R_i -\sum_{i=1}^n C_i}{\sum_{i=1}^n T_i} = \frac{\frac{1}{n}\sum_{i=1}^n R_i -\frac{1}{n}\sum_{i=1}^n C_i}
{\frac{1}{n}\sum_{i=1}^n T_i}\ .
$$
By the strong law of large numbers \cite{R12} we have that almost surely
$$
\lim_{n\to +\infty } PnL_n = \frac{\EE[R]-\EE[C]}{\EE[T]} \ .
$$
\end{proof}

We consider integrable games with random variables $R$, $C$, and $T$ are all integrable.

\begin{definition}[Integrable games]
A game or strategy is integrable when $R$, $C$, and $T$ are integrable.
\end{definition}

We prove later that the selfish mining strategy is integrable. 
%

\begin{proposition}
For an integrable game with cost per unit time $c$, we have
$$
\frac{\EE[C]}{\EE[T]} = \EE[c] 
$$
\end{proposition}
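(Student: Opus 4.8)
The plan is to express the cost of one cycle as a time integral of the instantaneous cost and then apply a Wald-type identity. By the convention established above, the cost accumulated over a cycle of duration $T$ is
$$
C = \int_0^T c(t)\,dt \ ,
$$
since $\int_{t_1}^{t_2} c(t)\,dt$ is the cost incurred on $[t_1,t_2]$. The decisive structural inputs are that the instantaneous cost process $(c(t))_t$ is independent of the mining outcomes, and hence of the cycle duration $T$, and that all the $c(t)$ share the common mean $\EE[c(t)]=\EE[c]$.

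First I would condition on the duration. For a fixed value $T=\tau$, independence of the cost process from $T$ together with Fubini's theorem (licensed by the integrability of $\int_0^\tau c(t)\,dt$) gives
$$
\EE\!\left[\int_0^\tau c(t)\,dt\right] = \int_0^\tau \EE[c(t)]\,dt = \tau\,\EE[c] \ .
$$
Hence $\EE[C\mid T] = \EE[c]\cdot T$ almost surely.

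Next I would take expectations and invoke the tower property to obtain $\EE[C] = \EE\big[\EE[C\mid T]\big] = \EE[c]\,\EE[T]$. Dividing by $\EE[T]$, which is finite and strictly positive since $T$ is integrable and positive, yields $\EE[C]/\EE[T]=\EE[c]$, as claimed.

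The main obstacle I anticipate is the rigorous justification of the conditioning step together with the interchange of expectation and integration: one must confirm that the cost process is genuinely independent of the duration variable $T$, so that conditioning on $T$ does not bias the law of $(c(t))$ on $[0,T]$, and that the integrability hypotheses on $C$ and $T$ suffice to apply Fubini's theorem and the tower property. An alternative route that avoids explicit conditioning is renewal–reward: over $n$ cycles the accumulated cost is $\sum_{i=1}^n C_i = \int_0^{S_n} c(t)\,dt$ with $S_n=\sum_{i=1}^n T_i$, and the strong law gives $\frac{1}{S_n}\int_0^{S_n} c(t)\,dt \to \EE[c]$, while the same strong-law argument as in Theorem~\ref{thm_profit} yields $\frac{\sum_{i=1}^n C_i}{\sum_{i=1}^n T_i} \to \EE[C]/\EE[T]$; matching the two limits proves the identity.
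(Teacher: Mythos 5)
Your proposal is correct and follows essentially the same route as the paper: both write $C=\int_0^T c(t)\,dt$ and deduce $\EE[C]=\EE[c]\,\EE[T]$ from the independence of the cost process and $T$ together with integrability. The only difference is cosmetic --- the paper cites Wald's identity as a black box, while you supply its standard proof (conditioning on $T$, Fubini, tower property), and your renewal--reward remark is just another standard derivation of the same fact.
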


\begin{proof}
We have 
$$
C=\int_0^T c(t) \, dt 
$$
and, since $T$ is integrable and the $(c(t))$ are integrable, i.i.d. and independent with $T$, by Wald's identity we have
$$
\EE[C] = \EE[T] .\EE[c]  \ .
$$
\end{proof}

\begin{definition}[Revenue ratio]
The \textit{revenue ratio} of a strategy $\xi$ is defined as
$$
\Gamma(\xi) =\frac{\EE[R]}{\EE[T]} \ .
$$
\end{definition}

\begin{definition}[Cost ratio]
The \textit{cost ratio} of a strategy $\xi$ is defined as
$$
\Upsilon (\xi) =\frac{\EE[C]}{\EE[T]} \ .
$$
\end{definition}

For selfish mining and honest mining we have equal cost ratio $\EE[C]/\EE[T]=\EE[c]$.

The \textit{revenue ratio} is the benchmark for profitability of integrable strategies with equal cost ratio, more precisely, we have

\begin{proposition}[Comparison of profitabilities]
We consider two integrable strategies with equal cost ratio. 
The strategy $\xi$ is more profitable than strategy $\xi'$ if and only if
$$
\Gamma(\xi')\leq \Gamma(\xi) \ .
$$
\end{proposition}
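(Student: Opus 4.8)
The plan is to reduce everything to the additive decomposition of the profitability furnished by Theorem \ref{thm_profit}. First I would observe that, for any integrable strategy $\xi$, the asymptotic profit per unit of time splits into the revenue and cost ratios:
\[
PnLt_\infty(\xi) = \frac{\EE[R]-\EE[C]}{\EE[T]} = \frac{\EE[R]}{\EE[T]} - \frac{\EE[C]}{\EE[T]} = \Gamma(\xi) - \Upsilon(\xi).
\]
Splitting the fraction is legitimate because integrability of $R$, $C$, $T$ together with $\EE[T]>0$ guarantees that both ratios are finite real numbers.

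Next I would make explicit the meaning of ``more profitable'': by definition $\xi$ is more profitable than $\xi'$ precisely when $PnLt_\infty(\xi) \geq PnLt_\infty(\xi')$. Subtracting the two decompositions gives
\[
PnLt_\infty(\xi) - PnLt_\infty(\xi') = \bigl(\Gamma(\xi) - \Gamma(\xi')\bigr) - \bigl(\Upsilon(\xi) - \Upsilon(\xi')\bigr).
\]
Here the hypothesis enters: the two strategies have equal cost ratio, $\Upsilon(\xi) = \Upsilon(\xi')$, so the second parenthesis vanishes and the sign of the profitability gap coincides with the sign of $\Gamma(\xi) - \Gamma(\xi')$. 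Hence $PnLt_\infty(\xi) \geq PnLt_\infty(\xi')$ if and only if $\Gamma(\xi') \leq \Gamma(\xi)$, which is exactly the assertion.

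I do not expect a genuine obstacle: once the equal-cost-ratio assumption cancels the cost contributions, the statement is a direct rewriting of Theorem \ref{thm_profit}. The only points deserving a word of care are formal ones, namely checking that the ratios are well defined (finiteness of the expectations and positivity of $\EE[T]$, both secured by integrability of the game) and fixing once and for all that ``profitability'' denotes the quantity $PnLt_\infty$ of Theorem \ref{thm_profit}, so that ``more profitable'' is unambiguous.
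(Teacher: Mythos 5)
Your proof is correct and follows essentially the same route as the paper: both split $PnLt_\infty$ into $\Gamma - \Upsilon$ (the paper does this implicitly by splitting the fraction inside the difference), invoke the equal-cost-ratio hypothesis to cancel the cost terms, and read off that the profitability gap equals $\Gamma(\xi)-\Gamma(\xi')$. Your version merely makes the decomposition and the well-definedness remarks explicit, which is a fine but inessential refinement.
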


\begin{proof}
Since 
$$
\frac{\EE[C(\xi)]}{\EE[T(\xi)]} = \frac{\EE[C(\xi')]}{\EE[T(\xi')]}
$$
we get that 
$$
PnLt_\infty(\xi) - PnLt_\infty(\xi')=\frac{\EE[R(\xi)]-\EE[C(\xi)]}{\EE[T(\xi)]} - \frac{\EE[R(\xi')]-\EE[C(\xi')]}{\EE[T(\xi')]} = \Gamma(\xi)- \Gamma(\xi')
$$
and the result follows.
\end{proof}

\medskip

\medskip

In the previous literature (in particular in {\cite{ES14}} and \cite{SSZ16}  ) the authors 
were only considering without proper justification 
a ``relative revenue'' benchmark for profitability. This is a non-standard 
notion in accountability that they define as the ratio
\begin{eqnarray*}
 \frac{\mathbb{E} [R_S]}{\mathbb{E} [R_S]+\mathbb{E} [R_N]}
\end{eqnarray*}
where $R_S$ and $R_N$ are the revenues of the selfish miners and the rest of the network. 
The effect of the selfish mining strategy 
is to reduce both $\mathbb{E} [R_S]$ 
and $\mathbb{E} [R_N]$, and to increase this relative revenue on certain conditions on $q$ and $\gamma$. 
Obviously, increasing the relative revenue 
at the cost of reducing its own revenue is not a sound strategy in general. Also discussing profitability via a Markov model without 
time duration considerations of the strategy is unsound. The time dynamics is absent in the Markov model, and cannot take 
account situations where the expected time of some attack cycles may take very long time and will impact the $PnLt$. In particular, 
as we prove, this is what makes selfish mining non-profitable without a difficulty adjustment. But after a difficulty adjustment,
and for attack cycles small compared to the difficulty adjustment period, the ``relative revenue'' converges to our Revenue Ratio and 
so we can justify their benchmark. Also in \cite{SSZ16} the ``objective function'' $REV$ that these authors use, that is an 
asymptotic version of the ``relative revenue'' above, in their definition at the end of section 2 we can remove the expectation as well 
as the 
$\liminf$ in the definition formula since the limit exists and is constant by the same argument that we have used in the proof of 
Theorem \ref{thm_profit} (by using the strong law of numbers Theorem provided one can prove that the strategies they consider are integrable).

\subsection{Block rewards and transaction fees.}

As we already explained, the revenue comes from the block reward $b=b_0+f$. We have that $b_0$ is constant and $f$ is a random variable.
Let $Z$ the random variable denoting the number of blocks validated by the miner in an attack cycle.

\begin{proposition}
We have
$$
\EE[R]= \EE[Z] (b_0 + \EE[f])=\EE[Z] \EE[b]
$$
\end{proposition}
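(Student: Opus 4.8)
The plan is to write the revenue of a cycle as a sum of the individual block rewards over the blocks the miner actually gets accepted into the official blockchain, and then to separate the deterministic bounty from the random fees by an application of Wald's identity, exactly as was done for the cost in the previous Proposition.

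First I would record that, by definition, in a single cycle the miner eventually places $Z$ of his blocks in the official blockchain, and each accepted block $i$ carries the reward $b_0 + f_i$, where $f_i$ is the (random) total fee of the transactions included in that block. Hence
$$
R = \sum_{i=1}^{Z} (b_0 + f_i) = b_0\, Z + \sum_{i=1}^{Z} f_i \ .
$$
Taking expectations and using linearity on the first term gives $\EE[R] = b_0\,\EE[Z] + \EE\!\left[\sum_{i=1}^{Z} f_i\right]$.

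Next I would evaluate the fee term. The fees $(f_i)$ are i.i.d. integrable random variables with common mean $\EE[f]$, and they are independent of the block-race dynamics that determine how many of the miner's blocks end up accepted; in particular $Z$ is independent of the sequence $(f_i)$ (equivalently, $Z$ is a stopping time for the natural filtration). Wald's identity then yields $\EE\!\left[\sum_{i=1}^{Z} f_i\right] = \EE[Z]\,\EE[f]$, so that
$$
\EE[R] = b_0\,\EE[Z] + \EE[Z]\,\EE[f] = \EE[Z]\,(b_0 + \EE[f]) = \EE[Z]\,\EE[b] \ ,
$$
using $b = b_0 + f$ together with the constancy of $b_0$.

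The step I expect to be the crux is the justification that $Z$ is independent of the fees $(f_i)$, since only under this hypothesis does Wald's identity apply. This is where the modelling assumption enters: the fees attached to a block are fixed by the transactions a miner chooses to include and are unrelated to whether the block ultimately wins the fork race, so the counting variable $Z$ --- governed purely by the relative hashrates and the timing of block discoveries --- carries no information about the realized fee values. Granting this independence, the computation above is routine and the two displayed identities follow.
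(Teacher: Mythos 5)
Your proof is correct and follows essentially the same route as the paper: decompose $R=\sum_{i=1}^Z(b_0+f_i)=Zb_0+\sum_{i=1}^Z f_i$, then apply Wald's identity to the fee sum using that the $(f_i)$ are i.i.d. and independent of $Z$. The only quibble is your parenthetical ``equivalently, $Z$ is a stopping time for the natural filtration'' --- independence of $Z$ from $(f_i)$ and $Z$ being a stopping time are two \emph{different} sufficient hypotheses for Wald's identity, not equivalent ones --- but since you invoke the independence hypothesis, which is exactly what the paper assumes, the argument stands as given.
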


\begin{proof}
The revenue after an attack cycle consists in the sum of the rewards $(b_i)_{1\leq i\leq Z}$ of the 
blocks validated by the miner in the official blockchain,
$$
R=\sum_{i=1}^Z b_i =\sum_{i=1}^Z (b_0 +f_i)=Z b_0 +\sum_{i=1}^Z f_i
$$
The random variables $(f_i)$ are i.i.d. and also independent with $Z$. So we have by Wald's identity
$$
\EE\left [\sum_{i=1}^Z f_i \right ]=\EE[Z].\EE[f]
$$
so,
$$
\EE[R] = \EE[Z] (b_0 + \EE[f]).
$$
\end{proof}

Therefore, for the purpose of computation of $\EE[R]$, nothing changes if we consider $b$ constant equal to 
its average value (that is the bounty plus the average of total transaction fees per block). So we can assume without loss of generality 
that the reward per block $b$ is constant.

\section{Mining strategies.}

Let us fix some notations. We have two sets of miners (for example, honest miners and attacking miners for an attack). 
The progression of blocks are described by two independent Poisson processes $N$ and $N'$ respectively, 
with parameter $\alpha$, resp. $\alpha'$.
Interblock validation times are denoted by $(T_i)$ for the honest miners and $(T'_i)$ for the selfish miner. We denote 
for $n\geq 1$
\begin{align*}
 S_n &= T_1+T_2+\ldots +T_n  \ ,\\
 S'_n &= T'_1+T'_2+\ldots +T'_n \ .\\
\end{align*}
We recall (see \cite{GPM17}) that the random variables $(T_i)$ (resp. $(T'_i)$) follow an exponential density with parameter $\alpha$ (resp. $\alpha'$) 
and the random variable $S_n$ (resp. $S'_n$) follows a gamma density distribution with parameter $(n,\alpha)$ (resp. $(n,\alpha')$). We denote 
\begin{align*}
 \tau_0 &= \frac{1}{\alpha + \alpha'} \ ,\\
 p &= \frac{\alpha}{\alpha + \alpha'} \ ,\\
 q &= \frac{\alpha'}{\alpha + \alpha'} \ ,
\end{align*}
We have $p+q=1$ and if $\alpha'<\alpha$ then $0<q<1/2<p<1$ (and $\tau_0 =10 \text{ min}$ for 
the Bitcoin network in normal conditions). The quantities $p$ and $q$ 
represent relative hashrates, and also probabilities of finding the next block by each group 
of miners as the following elementary computations show \cite{R12}:

\begin{lemma}
 We have
 \begin{align*}
  \PP [T_1 < T'_1] &= p \ , \\
  \PP [T'_1 < T_1] &= q \ , \\
  \PP [T'_1 <T_1<S'_2] &= pq \ ,\\
  \EE[T_1\wedge T'_1] &= \tau_0 \ .
 \end{align*}
\end{lemma}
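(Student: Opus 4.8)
The plan is to dispatch each of the four identities by elementary properties of independent exponential variables, and I expect only the third to require genuine care. First I would handle the two winning probabilities by conditioning on the value of the relevant interblock time and integrating against the survival function of the other. For the first identity,
$$
\PP[T_1 < T'_1] = \int_0^\infty \alpha e^{-\alpha t}\,\PP[T'_1 > t]\,dt = \int_0^\infty \alpha e^{-(\alpha+\alpha')t}\,dt = \frac{\alpha}{\alpha+\alpha'} = p,
$$
and the second identity $\PP[T'_1 < T_1] = q$ follows by exchanging the roles of $\alpha$ and $\alpha'$ (equivalently, by taking complements, since $\PP[T_1 = T'_1] = 0$).

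For the last identity I would use the standard fact that the minimum of two independent exponentials is again exponential with rate equal to the sum of the rates: since $\PP[T_1 \wedge T'_1 > t] = \PP[T_1 > t]\,\PP[T'_1 > t] = e^{-(\alpha+\alpha')t}$, the variable $T_1 \wedge T'_1$ is exponential with parameter $\alpha+\alpha'$, whence $\EE[T_1 \wedge T'_1] = 1/(\alpha+\alpha') = \tau_0$.

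The third identity is where the only (mild) obstacle lies, namely correctly decoupling the three variables $T_1$, $T'_1$, $T'_2$ so that the answer factors as $pq$. The cleanest route is the lack-of-memory property. Writing $S'_2 = T'_1 + T'_2$, I would condition on $\{T'_1 < T_1\}$, an event of probability $q$ by the second identity. On this event, at the instant $T'_1$ the honest block has not yet appeared, so by memorylessness the residual time $T_1 - T'_1$ is again $\mathrm{Exp}(\alpha)$ and is independent of $T'_2 \sim \mathrm{Exp}(\alpha')$. The remaining constraint $T_1 < S'_2$ then reads $T_1 - T'_1 < T'_2$, a race between an $\mathrm{Exp}(\alpha)$ and an $\mathrm{Exp}(\alpha')$ variable that the former wins with probability $p$ by the first identity. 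Hence $\PP[T'_1 < T_1 < S'_2] = q\cdot p = pq$.

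As a check, and to avoid invoking memorylessness, I would alternatively compute the probability directly by integrating over the values $s$ of $T'_1$ and $u$ of $T'_2$, using $\PP[s < T_1 < s+u] = e^{-\alpha s} - e^{-\alpha(s+u)}$:
$$
\PP[T'_1 < T_1 < S'_2] = \int_0^\infty\!\!\int_0^\infty \alpha' e^{-\alpha' s}\,\alpha' e^{-\alpha' u}\,\bigl(e^{-\alpha s} - e^{-\alpha(s+u)}\bigr)\,du\,ds.
$$
The integrand factors into a function of $s$ times a function of $u$, and the two resulting single integrals evaluate to $\alpha'/(\alpha+\alpha') = q$ and $1 - \alpha'/(\alpha+\alpha') = p$ respectively, confirming the product $pq$ and closing the proof.
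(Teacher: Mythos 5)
Your proof is correct, and it fills in exactly the elementary computations the paper leaves implicit: the lemma is stated there without proof, with a citation to a standard probability text, as routine facts about races between independent exponential variables. Both of your arguments for the third identity (the memorylessness conditioning giving $q\cdot p$, and the factored double integral) are valid, and the remaining three identities are handled by the standard exponential-minimum facts, so nothing is missing.
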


\subsection{Honest strategy.}

The cycle for the honest strategy lasts until a block is found. So the stopping time for the honest strategy is
$$
\tau_H = T_1'\wedge T_1 \ .
$$
We have $\EE[\tau_H] =\tau_0$. 

The reward in a cycle is $0$ or $b$ depending who the miner is, thus 
$$
\EE[R(\tau_H)] =p\cdot 0+q\cdot b =qb\ ,
$$
and we have,

\begin{theorem} \label{honest}
We have that $\tau_H$ and $ R (\tau_H)$ are integrable and 
\begin{align*}
\EE [R (\tau_H)] &= qb  \ ,\\
\EE [\tau_H]  &= \tau_0 \ .
\end{align*}
Therefore,
$$
\Gamma (H)=\frac{\EE [R (\tau_H)]}{\EE [\tau_H]} = \frac{qb}{\tau_0} \ .
$$
\end{theorem}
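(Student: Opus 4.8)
The plan is to read off all three claims directly from the preceding Lemma together with the definition of the revenue ratio, exploiting that $\tau_H = T_1 \wedge T_1'$ is nothing but the minimum of the two first interblock times. First I would settle integrability. Since $T_1$ and $T_1'$ are exponential with parameters $\alpha$ and $\alpha'$ respectively and independent, their minimum $\tau_H = T_1 \wedge T_1'$ is again exponentially distributed, now with parameter $\alpha + \alpha'$; in particular it has finite mean and is integrable. The reward $R(\tau_H)$ takes only the two values $0$ and $b$, so it is bounded and hence trivially integrable.

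Next I would compute the two expectations. For the duration, the identity $\EE[T_1 \wedge T_1'] = \tau_0$ is exactly the last line of the Lemma, so $\EE[\tau_H] = \tau_0$. For the revenue, the cycle pays $b$ precisely when the tracked miner validates the block first, i.e. on the event $\{T_1' < T_1\}$, and pays $0$ on the complementary event $\{T_1 < T_1'\}$; the two continuous exponentials coincide only on a null set, so these events form a genuine partition up to measure zero. Using $\PP[T_1' < T_1] = q$ from the Lemma, I obtain $\EE[R(\tau_H)] = q \cdot b + p \cdot 0 = qb$.

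Finally, the stated formula for $\Gamma(H)$ is just the definition of the revenue ratio applied to the honest strategy, with $R = R(\tau_H)$ and $T = \tau_H$: we get $\Gamma(H) = \EE[R(\tau_H)] / \EE[\tau_H] = qb/\tau_0$.

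There is really no substantive obstacle here: every ingredient — the distribution of the minimum of independent exponentials, the probability $q$ that the selfish miner wins the race, and the value $\tau_0$ for the expected minimum — has already been recorded in the Lemma, and both integrability statements are immediate since one variable is exponential and the other is bounded. The only point genuinely worth stating explicitly is the integrability (in particular the boundedness of $R(\tau_H)$), because it is exactly the hypothesis needed so that Theorem \ref{thm_profit} can subsequently be applied to the honest strategy.
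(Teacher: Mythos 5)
Your proof is correct and follows essentially the same route as the paper: the paper also reads $\EE[\tau_H]=\tau_0$ and $\PP[T_1'<T_1]=q$ off the preceding Lemma and concludes $\EE[R(\tau_H)]=p\cdot 0+q\cdot b=qb$, then applies the definition of the revenue ratio. Your only addition is making the integrability explicit (the minimum of independent exponentials is exponential with parameter $\alpha+\alpha'$, and $R(\tau_H)$ is bounded), which the paper asserts without comment.
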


\subsection{A stability result}

We establish an upper bound for the revenue ratio of a mining strategy using a martingale argument that 
is central for the other results in this article (see the next two sections).

\begin{proposition} \label{upperboundhonest}
Let $\tau$ be the stopping time of an arbitrary integrable strategy. 
Without difficulty adjustments, we have  $\Gamma (\tau)\leq \alpha' b$. 
\end{proposition}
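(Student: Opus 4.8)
The plan is to bound the miner's revenue in a cycle by the number of blocks he actually mines, and then to control that count by an optional-stopping argument applied to the compensated Poisson process $N'(t)-\alpha' t$. The bound should in fact be sharp, with equality achieved by the honest strategy, so the proof must produce $\alpha' b$ exactly and not a weaker constant.

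First I would observe that the revenue in a cycle can only come from blocks the attacker has himself validated: he can never collect the reward $b$ on a block he did not mine. If $Z$ denotes the number of his blocks that end up in the official blockchain, then $R(\tau)=bZ$, and since every such block must have been mined during the cycle we have the pointwise inequality
$$
R(\tau)=b\,Z\le b\,N'(\tau),
$$
where $N'(\tau)$ is the number of blocks mined by the attacker up to the random end $\tau$ of the cycle. This reduces the statement to proving $\EE[N'(\tau)]\le \alpha'\,\EE[\tau]$, and in fact equality will hold.

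Next I would introduce the compensated counting process $M_t=N'(t)-\alpha' t$. Since $N$ and $N'$ are independent Poisson processes, $M_t$ is a martingale for the joint filtration $\cF_t=\sigma\big(N(s),N'(s):s\le t\big)$, and $\tau$, being the stopping time of a strategy that reacts to the observed history of both miners, is a stopping time for $\cF_t$. Applying Doob's Optional Stopping Theorem, in the Wald form valid for a counting process stopped at an integrable stopping time, gives
$$
\EE[N'(\tau)]=\alpha'\,\EE[\tau].
$$
Combining with the previous inequality yields $\EE[R(\tau)]\le b\,\alpha'\,\EE[\tau]$, and dividing by $\EE[\tau]>0$ gives exactly $\Gamma(\tau)=\EE[R(\tau)]/\EE[\tau]\le \alpha' b$.

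The main obstacle is the rigorous justification of the optional-stopping step, since integrability of $\tau$ alone does not license a naive application of Doob's theorem at the unbounded horizon. I would handle this by applying optional stopping at the bounded times $\tau\wedge n$, where $\EE[M_{\tau\wedge n}]=0$ holds unconditionally, and then letting $n\to\infty$: one has $N'(\tau\wedge n)\uparrow N'(\tau)$ with finite limit by monotone convergence, while $\alpha'\,\EE[\tau\wedge n]\to\alpha'\,\EE[\tau]$ since $\tau$ is integrable. This is precisely Wald's identity for the Poisson process and delivers the required equality. As a sanity check, the honest strategy saturates the bound: for $\tau_H=T_1\wedge T'_1$ one has $Z=N'(\tau_H)$ almost surely, so $\Gamma(H)=qb/\tau_0=\alpha' b$, confirming the estimate is sharp and attained.
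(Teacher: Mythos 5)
Your proof is correct and takes essentially the same route as the paper's: the pointwise bound $R(\tau)\le b\,N'(\tau)$, Doob's optional stopping theorem applied to the compensated martingale $N'(t)-\alpha' t$ at the bounded stopping times $\tau\wedge n$, and monotone convergence to pass to the unbounded horizon. Your only refinement is recording the equality $\EE[N'(\tau)]=\alpha'\,\EE[\tau]$ (Wald's identity) and noting sharpness via the honest strategy, where the paper is content with the inequality it needs.
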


\begin{proof}
For any strategy  we always have $R(\tau)\leq N'(\tau) b$.
By applying Doob's theorem to the compensated martingale $N'(t) - \alpha' t$ 
and to the finite stopping time $\tau\wedge t_0$ with $t_0>0$, we get $\EE[N(\tau\wedge t_0)]\leq \alpha' \EE[\tau\wedge t_0]$. 
Now, using the monotone convergence theorem with $t_0\to +\infty$, we have $\EE[N'(\tau)]\leq \alpha' \EE[\tau]$. 
Hence, $\Gamma (\tau) = \frac{\EE[R(\tau)]}{\EE[\tau]} \leq \alpha' b$.
\end{proof}

Therefore Theorem \ref{honest} implies  that \textit{without a difficulty adjustment}, 
$\Gamma(\tau)\leq \alpha' b = \Gamma (H)$ since $\alpha' = \frac{q}{\tau_0}$, 
so, in this case, honest mining is optimal. 

\begin{theorem}[Theorem from beyond] \label{from_beyond}
Without difficulty adjustments, the honest mining strategy is optimal. 
\end{theorem}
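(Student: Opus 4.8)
The plan is to combine Proposition~\ref{upperboundhonest} with Theorem~\ref{honest} and the comparison criterion of the earlier Proposition on profitabilities, so that the theorem becomes essentially a bookkeeping corollary of the martingale bound already established. First I would recall that, by the cost analysis, every integrable strategy $\xi$ shares the \emph{same} cost ratio $\Upsilon(\xi)=\EE[c]$ with honest mining; this is exactly the hypothesis needed to invoke the Comparison of profitabilities proposition, which reduces the optimality question to comparing revenue ratios $\Gamma$. Thus it suffices to show $\Gamma(\tau)\leq \Gamma(H)$ for every integrable strategy with stopping time $\tau$.

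Next I would quote Proposition~\ref{upperboundhonest} to get the uniform bound $\Gamma(\tau)\leq \alpha' b$, valid without difficulty adjustments for an arbitrary integrable strategy. The crucial identification is the arithmetic fact $\alpha' = q/\tau_0$, which follows directly from the definitions $q=\alpha'/(\alpha+\alpha')$ and $\tau_0=1/(\alpha+\alpha')$. Plugging this in yields $\Gamma(\tau)\leq \alpha' b = qb/\tau_0$, and by Theorem~\ref{honest} the right-hand side is precisely $\Gamma(H)$. Hence $\Gamma(\tau)\leq \Gamma(H)$ for all integrable strategies.

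Finally I would close the loop: since all strategies under consideration have equal cost ratio $\EE[c]$, the Comparison proposition converts the inequality $\Gamma(\tau)\leq\Gamma(H)$ into the statement $PnLt_\infty(\tau)\leq PnLt_\infty(H)$, which is exactly the assertion that honest mining is optimal. I would state this chain of implications crisply rather than recomputing anything, since every analytic ingredient has already been proved.

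The main obstacle is conceptual rather than computational: one must be careful that the class of competing ``strategies'' is genuinely the class of integrable strategies whose revenue is dominated by the selfish miner's block count, i.e.\ strategies satisfying $R(\tau)\leq N'(\tau)b$, so that Proposition~\ref{upperboundhonest} applies uniformly. In particular, the phrase ``without difficulty adjustments'' is doing real work, because it is exactly the hypothesis under which the compensated process $N'(t)-\alpha' t$ is a martingale with the fixed parameter $\alpha'$; once difficulty adjusts, $\alpha'$ is no longer the relevant rate and the bound $\Gamma(\tau)\leq\alpha' b$ can fail. I would therefore make explicit that the optimality is asserted only within this regime, and that the proof is simply the concatenation of the two earlier results together with $\alpha'=q/\tau_0$.
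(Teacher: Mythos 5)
Your proof is correct and follows essentially the same route as the paper: the paper's own argument is exactly the concatenation of Proposition~\ref{upperboundhonest}, the identity $\alpha' = q/\tau_0$, and Theorem~\ref{honest} to conclude $\Gamma(\tau)\leq \alpha' b = \Gamma(H)$. Your only addition is to spell out explicitly the equal-cost-ratio step and the invocation of the Comparison of profitabilities proposition, which the paper leaves implicit; this is a welcome clarification rather than a different approach.
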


Notice that the difficulty adjustment is necessary during the period of adoption. 
In a steady regime, the hashrate is expected to be roughly constant 
and the difficulty adjustment would be unnecessary and could be removed from the protocol. 
In that situation the honest mining strategy is optimal. This is a remarkable 
and unexpected property of the protocol, that is unlikely to have been foreseen by the creators. 
Just to make the reader think about it, 
we observe this gives some support to the Bitcoin protocol 
being imported from beyond where it is running in a steady state, whereas the name of ``Theorem from beyond''.

\subsection{Selfish mining strategy}\label{sec_selfish}

We consider now the selfish mining strategy as described in Section \ref{sec_SM}. We assume that the hashrate of the attackers is less than that 
of the honest miners (i.e. $\alpha'<\alpha$).
We denote by $\tau_{SM, \gamma}$ the duration time of an attack cycle.

\begin{lemma}
We have
\begin{align*}
  \tau_{SM, \gamma} & = \inf  \{ t\geq T_1 ; N(t) = N'(t) -1 + 2  \cdot \boldsymbol{1}_{T_1 < T'_1} +   2  \cdot \boldsymbol{1}_{T'_1 < T_1<S_2<S'_2} \} \ ,
\end{align*}
and the stopping time $\tau_{SM, \gamma}$ is finite  almost surely.
\end{lemma}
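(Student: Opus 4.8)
The plan is to follow the centered process $M(t)=N(t)-N'(t)$, which jumps by $+1$ at rate $\alpha$ and by $-1$ at rate $\alpha'$ and records the excess of honest over selfish blocks mined since the start of the cycle. An attack cycle closes exactly when the two competing branches reconcile into a single public chain, so I would translate each rule of Section~\ref{sec_SM} into a first-passage statement for $M$ to a level $K\in\{-1,+1\}$, and organize the argument according to who mines the first block. Throughout I use that $N(T_1)=1$, since $T_1$ is the epoch of the first honest block, whence $M(T_1)=1-N'(T_1)$.

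If $T_1<T'_1$ the honest miners win the first block with nothing withheld, there is no fork, and the cycle ends at $t=T_1$ with $M(T_1)=1$; the indicator $\mathbf{1}_{T_1<T'_1}$ is on and the second is off, so the target is $K=-1+2=1$ and $\tau_{SM,\gamma}=T_1$ is the first $t\ge T_1$ with $M(t)=1$. If instead $T'_1<T_1$ the selfish miner opens a lead of one block and keeps mining, and the subsequent race splits in two. In the subcase $S_2<S'_2$ (equivalently $T'_1<T_1<S_2<S'_2$, as $T_1<S_2$ always) the honest miners extend the public branch first and win the competition opened at $T_1$ by rule~(\ref{case1}); the cycle ends at $S_2$, where $M=2-1=1$, matching $K=1$ with the second indicator on. In the complementary subcase $S'_2<S_2$ the selfish miner prevails: either $T_1<S'_2<S_2$, in which he extends his branch and wins at $S'_2$ the competition opened at $T_1$, or $S'_2<T_1$, in which he is already two blocks ahead when the honest miners first score and the cycle then persists, under rules~(\ref{compet}) and~(\ref{case3}), until an honest block lowers his lead back to one. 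In both instances the cycle closes at the first $t\ge T_1$ with $M(t)=-1$; both indicators vanish and the target is $K=-1$, as required.

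The step demanding most care, and the one I expect to be the main obstacle, is the faithful reduction of the multi-step strategy to this clean first-passage description. First I would check that the exit time does not depend on $\gamma$: the aggregate honest discovery process is Poisson of rate $\alpha$ irrespective of how the miners split across the two branches, so $\gamma$ only decides which branch absorbs a given honest block, hence the revenue, never the timing; this explains its absence from the formula. Next I would verify that while the selfish lead exceeds $2$ each honest block is answered by a single released block under rule~(\ref{case3}) without reconciling the chains, so that $M$ may descend well below $-1$ without closing the cycle, and that the first return of $M$ up to $-1$ is necessarily realized by an honest block lowering the lead from $2$ to $1$. Finally I would note that the lower bound $t\ge T_1$ is exactly what discards the spurious visits of $M$ to the level $-1$ that occur before the honest miners have mined anything, for instance on $[T'_1,S'_2)$ in the subcase $T_1<S'_2$.

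Almost-sure finiteness then follows from the strong law of large numbers: since $\alpha'<\alpha$ we have $M(t)/t\to\alpha-\alpha'>0$ almost surely, so $M(t)\to+\infty$. When the target is $K=1$, from $M(T_1)=1-N'(T_1)\le 1$ and the unit upward steps, $M$ reaches $1$ at a finite time $\ge T_1$. When the target is $K=-1$, either $N'(T_1)\ge 2$, so $M(T_1)\le-1$ and the drift to $+\infty$ yields a finite first passage upward to $-1$, or we are in the subcase $T_1<S'_2<S_2$, where $M(S'_2)=1-2=-1$ at the finite epoch $S'_2$. In every case $\tau_{SM,\gamma}<\infty$ almost surely, the probabilistic estimates being routine once the combinatorial reduction above is in place.
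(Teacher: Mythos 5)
Your proof is correct and follows essentially the same route as the paper's: a case analysis on the orderings of $T_1, T'_1, S_2, S'_2$, identifying the cycle end as $T_1$, $S_2$, or $S'_2$ in the first three cases, and, when the selfish miner builds a lead of two or more before the first honest block, using the positive drift of $N - N'$ (from $\alpha > \alpha'$) to get almost sure finiteness of the first passage to $N = N' - 1$. Your write-up is in fact more detailed than the paper's (explicit strong law of large numbers argument, the $\gamma$-independence of the timing, the skip-free upward structure); the only blemish is a trivial slip in an aside, namely that in the subcase $T'_1 < T_1 < S'_2$ the pre-$T_1$ visit of $M$ to the level $-1$ occupies $[T'_1, T_1)$ rather than $[T'_1, S'_2)$.
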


\begin{proof}
Note that if $T_1 < T'_1$, then $ \tau_{SM, \gamma} = T_1$. If $T'_1 < T_1<S_2<S'_2$ then $ \tau_{SM, \gamma} = S_2$. 
If  $T'_1 < T_1<S'_2<S_2$ then $ \tau_{SM, \gamma} = S'_2$. Otherwise we have $S'_2 \leq T_1$ and $N(T_1) = 1 \leq N'(T_1) -1$, and in that case 
 $\tau_{SM, \gamma} = \inf  \{ t\geq T_1 ; N(t) = N'(t) -1 \}$ exists and is finite almost surely since $\alpha' < \alpha$.  
\end{proof}

\begin{theorem}\label{profitsmg}
 We have that $\tau_{SM, \gamma}$ and $R (\tau_{SM, \gamma})$ are integrable, and
 \begin{align*}
\EE [R (\tau_{SM, \gamma})] &=  \frac{(1+pq)(p - q) + pq}{p - q}\, qb -(1-\gamma) p^2q  \, b \ ,\\
\EE [\tau_{SM,\gamma}]  &= \frac{(1+pq) (p - q) + pq}{p - q} \, \tau_0  \ .
\end{align*}
Therefore,
$$
\Gamma(SM,\gamma)=\frac{\EE [R (\tau_{SM, \gamma})]}{\EE [\tau_{SM, \gamma}]} =
  \frac{qb}{\tau_0} -  (1 - \gamma) \frac{p^2 q (p - q)  b}{\left((1+pq)(p - q) + pq \right) \tau_0}\ .
$$
\end{theorem}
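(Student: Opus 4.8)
The plan is to express both expectations in terms of the single quantity $\EE[\tau_{SM,\gamma}]$ and then to pin that quantity down by applying Doob's optional stopping theorem to the two compensated Poisson martingales $N(t) - \alpha t$ and $N'(t) - \alpha' t$. Write $\tau = \tau_{SM,\gamma}$ and let $A = \{T_1 < T'_1\}$ and $B = \{T'_1 < T_1 < S_2 < S'_2\}$ be the two events singled out in the preceding lemma, so that by that lemma $N(\tau) = N'(\tau) - 1 + 2\cdot\boldsymbol 1_A + 2\cdot\boldsymbol 1_B$ almost surely. The events $A$ and $B$ are disjoint and $N'(\tau) - N(\tau) \in \{-1, 1\}$, so the terminal advance is bounded even though the running advance need not be.

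First I would settle integrability, which is the technical heart of the argument and the only place where the hypothesis $\alpha' < \alpha$ is essential. On the first three cases of the lemma the cycle ends once the second block of either miner appears, so $\tau \le S_2 \vee S'_2$ there and integrability is immediate from the gamma distributions of $S_2$ and $S'_2$. On the remaining case the cycle coincides with the first passage of the advance process $N'(t) - N(t)$ down to level $1$, starting from a level $\ge 2$; since the embedded walk moves up with probability $q$ and down with probability $p > q$, it has strictly negative drift, its descent time has geometric tails, and hence finite expectation. Combining the cases gives $\EE[\tau] < \infty$, and the same tail bound controls $N(\tau)$ and $N'(\tau)$, which is what licenses the optional stopping step below. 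This interchange of limits at the unbounded time $\tau$ is the one genuinely delicate point.

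With integrability in hand, optional stopping applied to $N'(t) - \alpha' t$ and $N(t) - \alpha t$ at $\tau \wedge t_0$, followed by the limit $t_0 \to \infty$ (monotone convergence for $\tau\wedge t_0 \uparrow \tau$, dominated convergence for the terminal values via the geometric tail bound), yields $\EE[N'(\tau)] = \alpha'\EE[\tau]$ and $\EE[N(\tau)] = \alpha\EE[\tau]$. Subtracting and inserting the lemma's identity gives $(\alpha - \alpha')\EE[\tau] = \EE[N(\tau) - N'(\tau)] = -1 + 2\PP[A] + 2\PP[B]$. It then remains to compute the two probabilities: $\PP[A] = \PP[T_1 < T'_1] = p$ is immediate, while $\PP[B] = p^2 q$ follows because, conditionally on $\{T'_1 < T_1\}$ (probability $q$) and by memorylessness of the exponential clocks restarted at $T'_1$, the event $B$ requires the honest miner to win two successive independent races, each with probability $p$. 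Using $\alpha - \alpha' = (p-q)/\tau_0$ together with the identity $-1 + 2p = p - q$ then solves for $\EE[\tau]$, and a one-line simplification matches the stated form $\frac{(1+pq)(p-q)+pq}{p-q}\tau_0$.

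Finally, for the revenue I would identify, case by case, how many of the selfish miner's blocks end up in the official chain. In every case the selfish miner collects $b\,N'(\tau)$ except on the event $B$: there his single secret block is orphaned precisely when the honest block that ends the competition is mined on the honest branch, which happens with probability $1 - \gamma$ independently of the timing. Hence $R(\tau_{SM,\gamma}) = b\,N'(\tau) - b\,\boldsymbol 1_B\,\boldsymbol 1_{\{\text{honest branch wins}\}}$, and taking expectations gives $\EE[R(\tau_{SM,\gamma})] = b\,\EE[N'(\tau)] - (1-\gamma)\,p^2 q\, b = \alpha' b\,\EE[\tau] - (1-\gamma)\,p^2 q\, b$. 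Substituting $\alpha' \tau_0 = q$ and the value of $\EE[\tau]$ already found produces the claimed formula, and the revenue ratio $\Gamma(SM,\gamma)$ is then simply the quotient of the two displayed expressions.
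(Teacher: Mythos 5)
Your proof is correct and rests on the same pillars as the paper's: Doob's theorem for the compensated martingales $N(t)-\alpha t$ and $N'(t)-\alpha' t$ at the bounded time $\tau\wedge t_0$; the lemma's terminal identity $N(\tau)=N'(\tau)-1+2\cdot\boldsymbol{1}_A+2\cdot\boldsymbol{1}_B$ with $\PP[A]=p$, $\PP[B]=p^2q$; and the orphan accounting $R(\tau)=b\,N'(\tau)-b\,\boldsymbol{1}_B\,\boldsymbol{1}_{\{\text{honest branch wins}\}}$, which produces the correction $(1-\gamma)p^2qb$. The genuine difference is in the order of operations, and it works in your favor: the paper does not know $\EE[\tau_{SM,\gamma}]<\infty$ in advance, so it must push both truncated Doob identities through the long conditioning on $\{\tau<t_0\}$ versus $\{\tau>t_0\}$, subtract so that the divergent pieces cancel, and extract finiteness and the value of $\EE[\tau_{SM,\gamma}]$ simultaneously by monotone convergence; you instead front-load integrability with a direct probabilistic argument ($\tau\le S_2\vee S'_2$ on the first three cases, a negative-drift first-passage bound on the last), after which both Wald identities $\EE[N(\tau)]=\alpha\EE[\tau]$ and $\EE[N'(\tau)]=\alpha'\EE[\tau]$ follow by monotone convergence alone and the theorem is a three-line computation. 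One point needs a patch: in the last case the descent of $N'(t)-N(t)$ to level $1$ starts from the random, unbounded level $N'(T_1)-1$, so the geometric tail of the per-level descent time is not by itself sufficient; you also need the starting level to be integrable, which holds since $\PP[N'(T_1)\ge k]=q^k$ makes $N'(T_1)$ geometric, and a Wald-type argument then gives finite expected descent time. That is a one-line fix, not a flaw in the method.
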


\begin{corollary}
  For $\gamma <1$, we have that 
  $$
  \Gamma(SM,\gamma)=\frac{\EE [R (\tau_{SM, \gamma})]}{\EE [\tau_{SM, \gamma}]} < \frac{qb}{\tau_0} = \Gamma(H) \ ,
  $$
  so, the Selfish Mining strategy with $\gamma <1$ is strictly less profitable than the honest strategy. 
\end{corollary}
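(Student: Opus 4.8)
The plan is to read the result directly off the explicit formula for $\Gamma(SM,\gamma)$ established in Theorem \ref{profitsmg}, which already exhibits $\Gamma(SM,\gamma)$ as $\Gamma(H)$ minus an explicit correction term. Since Theorem \ref{honest} gives $\Gamma(H) = qb/\tau_0$, the entire claim reduces to verifying that the subtracted quantity
$$
(1 - \gamma) \frac{p^2 q (p - q)  b}{\left((1+pq)(p - q) + pq \right) \tau_0}
$$
is strictly positive whenever $\gamma < 1$.

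First I would dispose of the factor $1-\gamma$: the hypothesis $\gamma < 1$ gives $1 - \gamma > 0$. Next I would check the remaining factors. By definition $p = \alpha/(\alpha+\alpha')$ and $q = \alpha'/(\alpha+\alpha')$ are both strictly positive, $\tau_0 = 1/(\alpha+\alpha') > 0$, and the block reward satisfies $b > 0$ by assumption. Hence the numerator factors $p^2$, $q$, $b$ and the factor $\tau_0$ in the denominator are all strictly positive.

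The one sign that genuinely uses the standing hypothesis is that of $p - q$. Here I would invoke the assumption of Section \ref{sec_selfish} that the attacker's hashrate is the smaller one, $\alpha' < \alpha$, equivalently $q < 1/2 < p$, which yields $p - q > 0$. With $p - q > 0$ in hand, both the numerator factor $p - q$ and the denominator $(1+pq)(p-q) + pq$ are strictly positive, the latter being a sum of positive terms. Consequently every factor of the correction term is strictly positive, the term itself is strictly positive, and therefore
$$
\Gamma(SM,\gamma) < \frac{qb}{\tau_0} = \Gamma(H) \ .
$$

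I do not anticipate any real obstacle here: given Theorem \ref{profitsmg} the corollary is an immediate sign check, the only substantive input being that $\alpha' < \alpha$ forces $p > q$. All the genuine difficulty lies upstream, in the computation of $\EE[R(\tau_{SM,\gamma})]$ and $\EE[\tau_{SM,\gamma}]$ that underlies the formula for $\Gamma(SM,\gamma)$.
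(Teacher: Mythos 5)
Your proof is correct and follows exactly the route the paper intends: the corollary is stated as an immediate consequence of the formula in Theorem \ref{profitsmg}, and your sign check of the correction term $(1-\gamma)\,p^2 q (p-q) b / \bigl(((1+pq)(p-q)+pq)\tau_0\bigr)$, using $\gamma<1$ and $p>q$ from the standing assumption $\alpha'<\alpha$, is precisely the omitted verification. Nothing is missing; your remark that the only substantive input is $p>q$ matches the paper's setup in Section \ref{sec_selfish}.
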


\begin{proof}[Proof of the Theorem]
For any $t_0\in \RR$, the stopping time $\tau_{SM, \gamma} \wedge t_0$ is bounded. Moreover, the compensated Poisson process 
$N(t) - \alpha t$ (resp. $N'(t) -\alpha' t$) is a well known martingale. So, using Doob's theorem \cite{R12}, we have:
\begin{align*}
&\alpha' \EE [\tau_{SM, \gamma} \wedge t_0]  =  \EE [N' (\tau_{SM, \gamma} \wedge t_0)]\\
  &=   \EE [N' (\tau_{SM, \gamma} \wedge t_0) | \tau_{SM, \gamma} < t_0] \cdot \PP [\tau_{SM, \gamma} < t_0] + 
  \EE [N' (\tau_{SM, \gamma} \wedge t_0) | \tau_{SM, \gamma} > t_0] \cdot \PP [\tau_{SM, \gamma} > t_0] \\
   &=  \EE [N' (\tau_{SM, \gamma}) | \tau_{SM, \gamma} < t_0] \cdot \PP [\tau_{SM, \gamma} < t_0]
  +\EE [N' (t_0) | \tau_{SM, \gamma} > t_0] \cdot \PP [\tau_{SM, \gamma} > t_0]\\
   &=   \EE [N (\tau_{SM, \gamma}) + 1 - 2 (\boldsymbol{1}_{T_1 < T'_1}
  +\boldsymbol{1}_{S'_1 < S_1 < S_2 < S'_2}) | \tau_{SM, \gamma} < t_0] \cdot \PP [\tau_{SM, \gamma} < t_0] \\
  & \ \ \ + \EE [N' (t_0)] \cdot \PP [\tau_{SM, \gamma} > t_0] \\
  &=  \EE [N (\tau_{SM, \gamma}) | \tau_{SM, \gamma} < t_0] \cdot \PP [\tau_{SM, \gamma} < t_0]
  +\PP [\tau_{SM, \gamma} < t_0] \\
  & \ \ \  +  \EE [N' (t_0)] \cdot \PP [\tau_{SM, \gamma} > t_0]  - 2 \cdot \EE [\boldsymbol{1}_{T_1 < T'_1}
  +\boldsymbol{1}_{S'_1 < S_1 < S_2 < S'_2} | \tau_{SM, \gamma} < t_0] \cdot \PP [\tau_{SM, \gamma} <
  t_0]\\
  & =  \EE [N (\tau_{SM, \gamma} \wedge t_0)] -\EE [N (\tau_{SM, \gamma} \wedge t_0) |
  \tau_{SM, \gamma} > t_0] \cdot \PP [\tau_{SM, \gamma} > t_0] +\PP [\tau_{SM, \gamma} < t_0] \\
  & \ \ \ -  2\cdot \EE [\boldsymbol{1}_{T_1 < T'_1} +\boldsymbol{1}_{S'_1 < S_1 < S_2
  < S'_2} | \tau_{SM, \gamma} < t_0] \cdot \PP [\tau_{SM, \gamma} < t_0] + \alpha' t_0 \cdot
  \PP [\tau_{SM, \gamma} > t_0]\\
  & =  \alpha \EE [\tau_{SM, \gamma} \wedge t_0] -\EE [N (t_0)] \cdot
  \PP [\tau_{SM, \gamma} > t_0] +\PP [\tau_{SM, \gamma} < t_0] \\ 
  & \ \ \  -  2 \cdot \EE [(\boldsymbol{1}_{T_1 < T'_1} +\boldsymbol{1}_{S'_1 < S_1 <
  S_2 < S'_2}) \cdot \boldsymbol{1}_{\tau_{SM, \gamma} < t_0}] + \alpha' t_0 \cdot \PP
  [\tau_{SM, \gamma} > t_0]
\end{align*}
and,
\begin{align*}
  (\alpha' - \alpha) \EE [\tau_{SM, \gamma} \boldsymbol{1}_{\tau_{SM, \gamma} > t_0}] & =  
  (\alpha' - \alpha) \EE [\tau_{SM, \gamma} \wedge t_0] - (\alpha' - \alpha) t_0
  \cdot \PP [\tau_{SM, \gamma} > t_0]\\
  &=\PP [\tau_{SM, \gamma} < t_0] - 2\cdot\EE [(\boldsymbol{1}_{T_1 < T'_1}
  +\boldsymbol{1}_{S'_1 < S_1 < S_2 < S'_2}) \cdot \boldsymbol{1}_{\tau_{SM, \gamma} < t_0}]
\end{align*}
The monotone convergence theorem implies that $\EE [\tau_{SM, \gamma}]$ if finite
and
\begin{align*}
  (\alpha' - \alpha) \EE [\tau_{SM, \gamma}] & =  1 - 2\cdot \EE
  [(\boldsymbol{1}_{T_1 < T'_1} +\boldsymbol{1}_{S'_1 < S_1 < S_2 < S'_2})]\\
  & =  1 - 2 (p + p^2 q)
\end{align*}
This gives
$$
\EE [\tau_{SM,\gamma}]  = \frac{(1+pq) (p - q) + pq}{p - q} \, \tau_0   \ .
$$
Also we have
\begin{align*}
  \EE [N' (\tau_{SM, \gamma}) \cdot \boldsymbol{1}_{\tau_{SM, \gamma} < t_0}] & = \EE [N'
  (\tau_{SM, \gamma}) | \tau_{SM, \gamma} < t_0] \cdot \PP [\tau_{SM, \gamma} < t_0]\\
  & = \EE [N' (\tau_{SM, \gamma} \wedge t_0)] -\EE [N' (\tau_{SM, \gamma} \wedge t_0) |
  \tau_{SM, \gamma} > t_0] \cdot \PP [\tau_{SM, \gamma} > t_0]\\
  & = \alpha' \EE [\tau_{SM, \gamma} \wedge t_0] -\EE [N' (t_0)] \cdot
  \PP [\tau_{SM, \gamma} > t_0]\\
  & = \alpha' \EE [\tau_{SM, \gamma} \wedge t_0] - \alpha' t_0 \cdot \PP
  [\tau_{SM, \gamma} > t_0]\\
  & = \alpha' \EE [\tau_{SM, \gamma} \cdot \boldsymbol{1}_{\tau_{SM, \gamma} < t_0}]
\end{align*}
So, by the monotone convergence theorem again, we get $\EE [N' (\tau_{SM, \gamma})]
= \alpha' \EE [\tau_{SM, \gamma}]$. In the same way, $\EE [N (\tau_{SM, \gamma})] = \alpha
\EE [\tau_{SM, \gamma}]$.

 Finally, we observe that at the end of an attack cycle, the selfish miner has no orphan block 
 unless $T'_1 < T_1 < S_2 < S'_2$ and the second block of the honest miners is found by a fraction $1- \gamma$ of all 
 honest miners. This event occurs with a probability $p^2 q (1-\gamma)$ and, in this case the selfish, miner has exactly one orphan block. 
Therefore, we have 
\begin{align*}
  \EE [R(\tau_{SM, \gamma})] &= \EE \left[ N'\left( \tau_{SM, \gamma} \right)\right] b- p^2 q (1-\gamma) b\\
  &= \alpha' \EE [\tau_{SM, \gamma}]b  - p^2 q (1-\gamma)b
\end{align*}
and we get 
$$
\EE [R (\tau_{SM, \gamma})] =  \frac{(1+pq)(p - q) + pq}{p - q}\, qb -(1-\gamma) p^2q  \, b  \ .
$$

\end{proof}
\begin{remark*}
The introduction of $t_0$ in the above proof is a technical point so that we can use Doob's stopping time theorem.  
\end{remark*}

\subsection{Poisson games.}\label{sec_poisson}
Theorem \ref{profitsmg} is a variation of the following result. 
\begin{theorem}[Poisson games]
Let $N_1$ and $N_2$ be two independent Poisson processes with parameters $\alpha_1$ and $\alpha_2$ with $\alpha_1 > \alpha_2$ 
and $N_1(0) = N_2(0) = 0$. Then, the stopping time 
$$
\tau = \inf \{ t>0 ; N_1(t) = N_2(t) + 1\}
$$ 
is finite a.s. and integrable. Moreover, we have $\EE [\tau] = \frac{1}{\alpha_1 - \alpha_2}$,
$\EE[N_1(\tau)] =  \frac{\alpha_1}{\alpha_1 - \alpha_2}$, $\EE[N_2(\tau)] =  \frac{\alpha_2}{\alpha_1 - \alpha_2}$. 
\end{theorem}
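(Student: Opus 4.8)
The plan is to run the compensated-martingale argument already used in the proof of Theorem \ref{profitsmg}, but arranged so that every passage to the limit is handled by monotone convergence, exploiting the fact that $N_1$ and $N_2$ are nondecreasing.

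First I would record finiteness. Writing $D(t) = N_1(t) - N_2(t)$, the strong law of large numbers for Poisson processes gives $N_i(t)/t \to \alpha_i$ almost surely, hence $D(t)/t \to \alpha_1 - \alpha_2 > 0$ and $D(t) \to +\infty$ almost surely. Since $D$ is integer valued, starts at $0$, and almost surely has no simultaneous jumps of $N_1$ and $N_2$, it moves by steps of $\pm 1$ and must therefore take the value $1$ at some finite time; thus $\tau < \infty$ a.s. and $D(\tau) = 1$ identically. Next, the processes $M(t) = N_1(t) - N_2(t) - (\alpha_1 - \alpha_2)t$, $M_1(t) = N_1(t) - \alpha_1 t$ and $M_2(t) = N_2(t) - \alpha_2 t$ are martingales. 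For a cutoff $t_0 > 0$ the stopping time $\tau \wedge t_0$ is bounded, so Doob's optional stopping theorem applies and yields
$$(\alpha_1 - \alpha_2)\,\EE[\tau \wedge t_0] = \EE[D(\tau \wedge t_0)], \qquad \EE[N_i(\tau \wedge t_0)] = \alpha_i\,\EE[\tau \wedge t_0] \ .$$
Because $D(t) \le 0$ for every $t < \tau$ while $D(\tau) = 1$, we have $D(\tau \wedge t_0) \le 1$, so the first identity forces $\EE[\tau \wedge t_0] \le 1/(\alpha_1 - \alpha_2)$ uniformly in $t_0$. Letting $t_0 \to +\infty$, the monotone convergence theorem gives $\EE[\tau] \le 1/(\alpha_1 - \alpha_2) < \infty$, so $\tau$ is integrable.

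With integrability in hand I would return to the second family of identities. Since $N_i$ is nondecreasing and $\tau \wedge t_0 \uparrow \tau$, we have $N_i(\tau \wedge t_0) \uparrow N_i(\tau)$, and monotone convergence on both sides of $\EE[N_i(\tau \wedge t_0)] = \alpha_i \EE[\tau \wedge t_0]$ gives $\EE[N_i(\tau)] = \alpha_i \EE[\tau]$ for $i = 1, 2$. Subtracting the two and using the deterministic identity $N_1(\tau) - N_2(\tau) = 1$ yields $1 = (\alpha_1 - \alpha_2)\EE[\tau]$, hence $\EE[\tau] = 1/(\alpha_1 - \alpha_2)$; substituting back produces $\EE[N_1(\tau)] = \alpha_1/(\alpha_1 - \alpha_2)$ and $\EE[N_2(\tau)] = \alpha_2/(\alpha_1 - \alpha_2)$.

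The only genuine obstacle is the exchange of limit and expectation at the unbounded stopping time $\tau$, precisely the point the authors flag in their remark about introducing $t_0$. I expect the decisive trick to be the two-stage use of the cutoff: first the difference martingale, combined with the a priori bound $D(\tau \wedge t_0) \le 1$, delivers finiteness of $\EE[\tau]$; only afterwards does one invoke the individual compensated martingales $M_1, M_2$, where the monotonicity of the counting processes makes the limit automatic and no dominating integrable envelope is required.
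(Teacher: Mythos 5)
Your proof is correct and takes essentially the same route as the paper: the paper actually states this theorem without a separate proof, presenting it as a variation of Theorem \ref{profitsmg}, whose proof uses exactly your ingredients --- Doob's optional stopping theorem applied to the compensated Poisson martingales at the bounded cutoff $\tau \wedge t_0$, followed by monotone convergence as $t_0 \to +\infty$. Your pathwise bound $N_1(\tau\wedge t_0)-N_2(\tau\wedge t_0) \le 1$ is a somewhat cleaner way to extract integrability of $\tau$ than the paper's conditioning decomposition on $\{\tau < t_0\}$ versus $\{\tau > t_0\}$, but the underlying mechanism is identical.
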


\subsection{Apparent hashrate}
As discussed before, the good notion for objective function is the revenue of the miner by unit of time. However, we can also compute the revenue of the miner by unit of block. 
We compute the proportion $q'$ of blocks mined by the selfish miner in the official blockchain. This represents the apparent hashrate of the selfish miner.

\begin{proposition}\label{qprime}
We have $q' = \frac{((1 + pq) (p - q) + pq) q - (1 - \gamma) p^2 q (p - q)}{p^2 q + p - q}$
\end{proposition}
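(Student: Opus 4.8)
The plan is to read $q'$ as a long-run frequency and reduce it to two per-cycle expectations. Let $Z$ denote, as above, the number of the selfish miner's blocks that are accepted into the official blockchain during one attack cycle (so that $\EE[R(\tau_{SM,\gamma})]=\EE[Z]\,b$), and let $W$ denote the number of honest blocks accepted into the official chain during the same cycle. The per-cycle pairs $(Z,W)$ are i.i.d., so applying the strong law of large numbers exactly as in the proof of Theorem \ref{thm_profit} to $\sum Z$ and $\sum (Z+W)$, the proportion of official blocks due to the selfish miner converges almost surely to
$$
q'=\frac{\EE[Z]}{\EE[Z]+\EE[W]} \ .
$$
By Theorem \ref{profitsmg} I already have
$$
\EE[Z]=\frac{\EE[R(\tau_{SM,\gamma})]}{b}=\frac{((1+pq)(p-q)+pq)\,q-(1-\gamma)p^2q(p-q)}{p-q} \ ,
$$
which is precisely the desired numerator over $p-q$. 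Hence everything comes down to computing $\EE[W]$ and verifying that $\EE[Z]+\EE[W]=(p^2q+p-q)/(p-q)$.

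To get $\EE[W]$ I would reuse the case decomposition of a cycle from the proof of Theorem \ref{profitsmg}, now recording how many honest blocks survive in the official chain in each terminal configuration. Viewing the cycle as a stream of block discoveries, each selfish with probability $q$ and honest with probability $p$, the cases are: (i) the first discovery is honest ($T_1<T'_1$, probability $p$), the cycle ends at once and one honest block is kept; (ii) the first two discoveries are selfish ($S'_2<T_1$, probability $q^2$), the selfish miner reaches a lead $\geq 2$ and by strategy point (2) the network ultimately adopts his fork, so every honest block is orphaned and none is kept; (iii) the first discovery is selfish and the second honest ($T'_1<T_1<S'_2$, probability $pq$), triggering the competition of strategy point (1). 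In the competition, when the honest second block precedes the selfish second block (the event $S_2<S'_2$, conditional probability $p$, hence joint probability $p^2q$) it is mined by the $\gamma$-fraction with probability $\gamma$, in which case the selfish fork wins and exactly one honest block survives, and by the $(1-\gamma)$-fraction with probability $1-\gamma$, in which case the honest fork wins and two honest blocks survive; if instead the selfish second block comes first (joint probability $pq^2$) the selfish fork wins and no honest block survives.

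Collecting the contributions gives
$$
\EE[W]=p\cdot 1+p^2q\gamma\cdot 1+p^2q(1-\gamma)\cdot 2+pq^2\cdot 0+q^2\cdot 0=p+p^2q(2-\gamma) \ .
$$
I would then add $\EE[Z]+\EE[W]$ over the common denominator $p-q$; the $\gamma$-terms cancel (as they must, since the total number of official blocks per cycle is independent of $\gamma$), and with $p+q=1$ the numerator collapses to $p^2q+p-q$, so that $\EE[Z]+\EE[W]=(p^2q+p-q)/(p-q)$. Dividing gives the claimed formula for $q'$. The only genuine difficulty is the bookkeeping in case (iii): one must justify, through Poisson splitting, that conditionally on the honest miners winning the block race the winner belongs to the $\gamma$-fraction with probability exactly $\gamma$, independently of the race, and one must correctly identify which blocks are orphaned in each end state; the algebra afterwards is routine.
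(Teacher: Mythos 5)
Your proposal is correct, and the algebra checks out: $\EE[W]=p+p^2q(2-\gamma)$, the $\gamma$-terms cancel in $\EE[Z]+\EE[W]$, and the sum is indeed $\frac{p^2q+p-q}{p-q}$, which matches the paper's expected number of official blocks per cycle, $1+\frac{p^2q}{p-q}$. However, your route to the denominator differs genuinely from the paper's. The paper does not enumerate terminal configurations a second time; instead it observes the pathwise identity that the number of official blocks produced in a cycle is always $\frac{N(\tau_{SM,\gamma})+N'(\tau_{SM,\gamma})+1}{2}$, valid in every case including the lead-$\geq 2$ case where the block counts are unbounded, and then reads off its expectation for free from the relations $\EE[N(\tau_{SM,\gamma})]=\alpha\,\EE[\tau_{SM,\gamma}]$ and $\EE[N'(\tau_{SM,\gamma})]=\alpha'\,\EE[\tau_{SM,\gamma}]$ already established by the Doob/martingale argument in the proof of Theorem \ref{profitsmg}. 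That identity buys economy: no case bookkeeping at all for the denominator, and in particular no need to argue that the honest blocks are all orphaned when the selfish lead reaches $2$. Your enumeration is more elementary and makes the \emph{composition} of the official chain explicit (which blocks survive, not just how many), with the $\gamma$-cancellation serving as a built-in sanity check; the cost is that you must get five probabilities and five survival counts right, and justify the splitting argument that, conditionally on the honest side winning the race, the winner lies in the $\gamma$-fraction with probability $\gamma$. On the other side of the ratio, your strong-law justification $q'=\frac{\EE[Z]}{\EE[Z]+\EE[W]}$ is arguably cleaner than the paper's phrasing via the expected number of repetitions $X_n$ needed to produce $n$ official blocks, which is an informal renewal-theoretic step; both lead to the same formula.
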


\begin{proof}

In all cases we observe that after one cycle of attack, the number of official blocks mined is always $\frac{N(\tau_{SM, \gamma}) + N'(\tau_{SM, \gamma}) + 1}{2}$. 
So, by the proof of Theorem \ref{profitsmg}, we get

\begin{align*}
  \EE \left [\frac{N(\tau_{SM, \gamma}) + N'(\tau_{SM, \gamma}) + 1}{2}\right ]  = 1+\frac{p^2 q}{p-q} \ .
\end{align*}

Hence, to get $n$ validated blocks in the official blockchain, the selfish miner needs to repeat his attack $X_n$ times with 

\begin{align}
  \EE [X_n]  =  \frac{n}{1+\frac{p^2 q}{p-q}} \ . \label{exn}
\end{align}

Therefore the average number of blocks mined by the selfish miner in a sequence of $n$ blocks is (we take $n\leq 2016$ to avoid a difficulty adjustment) 

\begin{align*}
  q' & = \frac{\EE [X_n]}{n} \cdot \frac{\EE [R(\tau_{SM, \gamma})]}{b}\\
&= \frac{((1 + pq) (p - q) + pq) q - (1 - \gamma) p^2 q (p - q)}{p^2 q + p - q}
\end{align*}

\end{proof}

We can rearrange this expression to get formula (8) for $R_{\text{pool}}$ from \cite{ES14},
$$
q'=\frac{q (1  - q)^2  (4 q + \gamma (1 - 2 q)) - q^3}{1 - q (1 + q (2 - q))} \ .
$$

\section{Selfish mining and difficulty adjustment}\label{sec_adj}
We examine now the impact of a difficulty's adjustment on the selfish mining strategy. 
After $n_0=2016$ blocks have been validated, the protocol modifies the speed of mining $\alpha$ and 
$\alpha'$ by a factor $\delta = \frac{\tilde{S}_{n_0}}{n_0 \tau_0}$ where $ \tilde{S}_{n_0}$ is the time needed
by the network to validate the sequence of $n_0$ blocks and $\tau_0=10 \text{ min}$.

\begin{proposition}\label{expd}
In presence of a single selfish miner with relative hashrate $q$, after the validation by the network of $n_0$ blocks, 
the parameter $\delta$ updating the difficulty's adjustment satisfies 
$$
\EE [\delta] = \frac{p - q + pq (p - q) + pq}{p^2 q + p - q} \ .
$$
\end{proposition}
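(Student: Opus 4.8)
The plan is to reduce the computation of $\EE[\delta]$ to the per-cycle quantities already established in Theorem \ref{profitsmg} and Proposition \ref{qprime}. First I would observe that during the attack the official blockchain grows one attack cycle at a time: writing $\tau_{SM,\gamma}^{(i)}$ for the duration of the $i$-th cycle, the time for the network to validate $n_0$ official blocks is
$$
\tilde S_{n_0} = \sum_{i=1}^{X_{n_0}} \tau_{SM,\gamma}^{(i)} \ ,
$$
where $X_{n_0}$ is the number of cycles needed to accumulate $n_0$ blocks in the official chain, i.e. exactly the random variable of \eqref{exn}. This uses the fact, recorded in the proof of Proposition \ref{qprime}, that the number of official blocks produced in one cycle is $\tfrac{N(\tau_{SM,\gamma})+N'(\tau_{SM,\gamma})+1}{2}$, with expectation $1+\tfrac{p^2q}{p-q}$.

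Next I would take expectations. Since the successive cycles are i.i.d. and $X_{n_0}$ is a stopping time determined by the block counts of the completed cycles, Wald's identity gives
$$
\EE[\tilde S_{n_0}] = \EE[X_{n_0}]\cdot \EE[\tau_{SM,\gamma}] \ .
$$
I would then substitute $\EE[X_{n_0}] = n_0/(1+\tfrac{p^2q}{p-q})$ from \eqref{exn} and $\EE[\tau_{SM,\gamma}] = \tfrac{(1+pq)(p-q)+pq}{p-q}\,\tau_0$ from Theorem \ref{profitsmg}.

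Finally, dividing by $n_0\tau_0$ as in the definition $\delta = \tilde S_{n_0}/(n_0\tau_0)$, the factor $n_0\tau_0$ cancels and the $(p-q)$ denominators combine, leaving
$$
\EE[\delta] = \frac{\EE[\tau_{SM,\gamma}]}{\tau_0\,(1+\tfrac{p^2q}{p-q})} = \frac{(1+pq)(p-q)+pq}{p^2q + p - q} \ ,
$$
which is the claimed formula once one expands $(1+pq)(p-q) = (p-q)+pq(p-q)$ in the numerator. The one step requiring genuine care, and which I expect to be the main obstacle, is the justification of Wald's identity in this setting: the cycle count $X_{n_0}$ is correlated with the individual cycle durations, so one must check that the event $\{X_{n_0}\ge i\}$ depends only on the block counts of cycles $1,\dots,i-1$ and is therefore independent of $\tau_{SM,\gamma}^{(i)}$, which lets the summation $\sum_i \EE[\tau_{SM,\gamma}^{(i)}\mathbf{1}_{X_{n_0}\ge i}]$ factor as $\EE[\tau_{SM,\gamma}]\,\EE[X_{n_0}]$. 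Equivalently one invokes the renewal–reward theorem, treating the boundary (overshoot) effect at the $n_0$-th block as negligible for $n_0 = 2016$, which is precisely the approximation already implicit in \eqref{exn}.
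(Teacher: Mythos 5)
Your proposal is correct and follows essentially the same route as the paper: the paper's proof likewise reduces to $\EE[\tilde{S}_{n_0}] = \EE[X_{n_0}]\cdot\EE[\tau_{SM,\gamma}]$, substitutes \eqref{exn} and Theorem \ref{profitsmg}, and divides by $n_0\tau_0$. The only difference is that the paper states the product identity without comment, whereas you supply the Wald/renewal--reward justification and flag the overshoot approximation already implicit in \eqref{exn} --- a welcome clarification, not a different argument.
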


\begin{proof}
Before the difficulty adjustment we have $\frac{1}{\alpha + \alpha'} = \tau_0$. So,  by (\ref{exn}) and Theorem \ref{profitsmg} with  $n_0 = 2016$, we have, 

\begin{eqnarray*}
 \EE[\tilde{S}_{n_0}] & = &  \EE[X_{n_0}] \cdot \EE[\tau_{SM, \gamma}]\\
&=& \frac{p - q + pq (p - q) + pq}{p^2 q + p - q}\cdot n_0 \tau_0
\end{eqnarray*}

\end{proof}

Note that $\EE[\delta]$ doesn't depend on $\gamma$ and that we have always $1\leq \EE[\delta] < 2$.

\begin{figure}[!ht]
   \includegraphics[height=6cm, width=9cm]{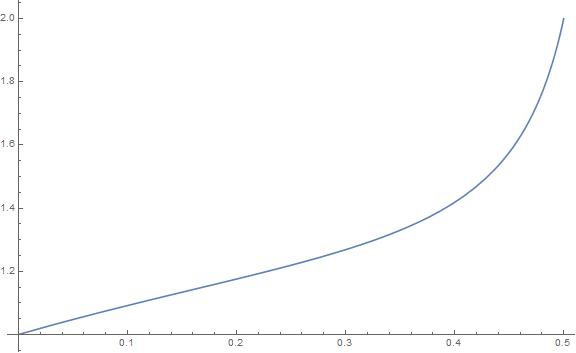}
   \caption{
   Difficulty adjustment $\EE[\delta]$ for $q \in[0, \frac{1}{2}]$.
   \medskip}
\end{figure}

We prove now that after a difficulty adjustment, the apparent hashrate and the revenue ratio coincide.

\begin{theorem}\label{profitpost}
After a difficulty adjustment, the revenue ratio of the selfish miner is $\Gamma(SM, \gamma) = \frac{q' b}{\tau_0}$
\end{theorem}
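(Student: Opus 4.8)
The plan is to show that a difficulty adjustment affects the selfish miner's economics in exactly one way — it rescales time — and that the rescaling factor is precisely $\EE[\delta]$. The combinatorial structure of an attack cycle (who wins each race, how many blocks are orphaned, and hence the expected revenue $\EE[R(\tau_{SM,\gamma})]$ per cycle) depends only on the relative hashrates $p$ and $q$, which a difficulty adjustment leaves unchanged. What the adjustment changes is the absolute mining speed: since the network had been producing official blocks a factor $\EE[\delta]$ too slowly, the protocol multiplies $\alpha$ and $\alpha'$ by $\EE[\delta]$, so the time unit $\tau_0 = 1/(\alpha+\alpha')$ is divided by $\EE[\delta]$. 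By Theorem \ref{profitsmg}, $\EE[\tau_{SM,\gamma}]$ is proportional to $\tau_0$ whereas $\EE[R(\tau_{SM,\gamma})]$ does not involve $\tau_0$; hence the revenue per cycle is unchanged and the expected cycle duration is divided by $\EE[\delta]$, giving
$$\Gamma_{\text{post}}(SM,\gamma) = \EE[\delta]\cdot\Gamma(SM,\gamma).$$

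First I would set $A = (1+pq)(p-q)+pq$ and rewrite the three already-computed ingredients in this notation. From Proposition \ref{expd}, observing that the numerator $p-q+pq(p-q)+pq$ equals $A$, one gets $\EE[\delta] = A/(p^2q+p-q)$. From Theorem \ref{profitsmg}, $\Gamma(SM,\gamma) = qb\,[A-(1-\gamma)p^2(p-q)]/(A\tau_0)$. From Proposition \ref{qprime}, $q' = q\,[A-(1-\gamma)p^2(p-q)]/(p^2q+p-q)$. The identity $\Gamma_{\text{post}}(SM,\gamma)=q'b/\tau_0$ is then a one-line cancellation: the factor $A$ in $\EE[\delta]$ cancels the $A$ in the denominator of $\Gamma(SM,\gamma)$, leaving exactly $q'b/\tau_0$.

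An equivalent and more conceptual route, which I would at least record as a sanity check, bypasses the algebra entirely. The whole purpose of the difficulty adjustment is to restore the growth rate of the official blockchain to one block per $\tau_0$. Once the adjustment is in force and the selfish miner keeps playing the same strategy, official blocks appear at rate $1/\tau_0$, and by the very definition of the apparent hashrate the selfish miner owns a fraction $q'$ of them; each is worth $b$, so his revenue per unit time is $q'b/\tau_0$. This yields the claim directly and confirms the computation above.

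The hard part will be making the rescaling rigorous rather than heuristic: one must argue that after the adjustment the two mining processes are again independent Poisson processes with the same ratio $q:p$ but with rates multiplied by $\EE[\delta]$, so that the stopping time $\tau_{SM,\gamma}$ has an identical law up to the clock change $t\mapsto t/\EE[\delta]$. This is exactly what guarantees that $\EE[R(\tau_{SM,\gamma})]$ is untouched while $\EE[\tau_{SM,\gamma}]$ picks up the factor $1/\EE[\delta]$, and it rests on the Poisson scaling property (a rate-$\alpha$ process read on a clock sped up by $c$ is a rate-$c\alpha$ process). A secondary subtlety worth a remark is that $\delta$ is itself random, so strictly one replaces the random rescaling by its mean $\EE[\delta]$; this is legitimate because over the long run of repeated cycles only the expected rescaling survives, consistently with the $PnLt_\infty$ framework of Theorem \ref{thm_profit}.
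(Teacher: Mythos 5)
Your proof is correct and takes essentially the same route as the paper's: both arguments observe that the adjustment multiplies the rates $\alpha,\alpha'$ by $\delta$, so the revenue ratio picks up a factor $\EE[\delta]$, and then combine Theorem \ref{profitsmg}, Proposition \ref{expd} and Proposition \ref{qprime} to check algebraically that $\EE[\delta]\cdot\Gamma(SM,\gamma)=q'b/\tau_0$. Your $A$-notation makes the cancellation more transparent, and your justification of the time-rescaling step is more explicit than the paper's one-line assertion, but these are refinements of the same proof, not a different one.
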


\begin{proof}
Before a difficulty adjustment, the speeds of validation $\alpha$ and $\alpha'$ were $\alpha = \frac{p}{\tau_0}$ and $\alpha' = \frac{q}{\tau_0}$. 
After a difficulty adjustment these quantities are both multiplied by a factor $\delta$. So, the revenue ratio 
$\Gamma (SM, \gamma)$ is also multiplied by $\delta$. 
Therefore, by Theorem \ref{profitsmg} and Proposition \ref{expd}, we get
$$\Gamma (SM, \gamma) = \left( qb - (1 - \gamma) \frac{p^2 q (p - q) b}{(1 + pq) (p - q) + pq} \right) 
\cdot \frac{p - q + pq (p - q) + pq}{p^2 q + p - q} \cdot \frac{b}{\tau_0}$$
After arranging this expression, we get $\Gamma (SM, \gamma) = \frac{q' b}{\tau_0}$.
\end{proof}

\begin{corollary}
After a period of difficulty adjustment, the selfish mining strategy becomes more profitable than staying honest  
if $\gamma > \frac{p - 2 q}{p - q} = \frac{1 - 3 q}{1 - 2 q}$.
\end{corollary}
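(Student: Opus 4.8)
The plan is to reduce the comparison to the single inequality $q' > q$, and then to read off the threshold from the explicit formula for $q'$ in Proposition \ref{qprime}. Since selfish mining and honest mining share the same cost ratio $\EE[C]/\EE[T] = \EE[c]$, the Comparison of profitabilities proposition tells us that selfish mining is strictly more profitable than the honest strategy precisely when $\Gamma(SM,\gamma) > \Gamma(H)$. By Theorem \ref{profitpost} we have, after a difficulty adjustment, $\Gamma(SM,\gamma) = q'b/\tau_0$, while Theorem \ref{honest} gives $\Gamma(H) = qb/\tau_0$. Hence the whole statement is equivalent to the inequality $q' > q$.

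First I would form the difference $q' - q$ using the expression from Proposition \ref{qprime},
$$
q' - q = \frac{\bigl((1+pq)(p-q)+pq\bigr)q - (1-\gamma)p^2q(p-q)}{p^2q + p - q} - q ,
$$
put it over the common denominator $p^2 q + p - q$, and simplify the numerator. Repeatedly using the relation $p + q = 1$ (in particular $1 - q = p$), I expect the numerator to collapse to $p^2 q\bigl(q - (1-\gamma)(p-q)\bigr)$, so that
$$
q' - q = \frac{p^2 q\bigl(q - (1-\gamma)(p-q)\bigr)}{p^2 q + p - q} .
$$

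Next I would determine the sign. Since we assume $\alpha' < \alpha$, i.e.\ $p > q$, both $p^2 q > 0$ and $p^2 q + p - q > 0$, so the sign of $q' - q$ is that of $q - (1-\gamma)(p-q)$. Solving $q - (1-\gamma)(p-q) > 0$ for $\gamma$ gives $1 - \gamma < q/(p-q)$, that is $\gamma > 1 - q/(p-q) = (p-2q)/(p-q)$. Finally, substituting $p = 1 - q$ rewrites the threshold as $(1-3q)/(1-2q)$, which is the claimed condition.

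The only real work is the algebraic simplification of the numerator in the second step; everything else is a direct application of the results already established. The factorization is routine but must be carried out carefully, and the identity $1 - q = p$ is exactly what produces the clean factor $q - (1-\gamma)(p-q)$ and hence the simple threshold.
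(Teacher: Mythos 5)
Your proposal is correct and follows the same route as the paper: reduce the claim to $q' > q$ via Theorem \ref{profitpost} and Theorem \ref{honest} (the paper's proof states exactly this equivalence and then cites Proposition \ref{qprime}), and then extract the threshold on $\gamma$ from the explicit formula for $q'$. The only difference is that you carry out the algebraic simplification of $q' - q$ explicitly --- the factorization $q' - q = \frac{p^2 q\,(q - (1-\gamma)(p-q))}{p^2 q + p - q}$ is indeed correct --- whereas the paper leaves that computation to the reader.
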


\begin{proof}
The selfish mining strategy outperforms the strategy of "staying honest forever" if $P(SM, \gamma) >  \frac{q b}{\tau_0}$. 
The condition is equivalent to $q'>q$. Hence we get the result by Proposition \ref{qprime}.
\end{proof}
This condition is also equivalent to the condition $q>\frac{1 - \gamma}{3 - 2 \gamma}$ from \cite{ES14}. 
Note, and this point is crucial, that after a first period of difficulty's adjustment, the difficulty remains constant on average. 
So, the selfish mining strategy becomes more profitable than the honest strategy if the hashrates of the miners stay the same.

\section{Duration before the attack becomes profitable}

With our new model we can do some practical computations that are not possible with a pure Markov modeling. 
For example, we can compute the how long it takes to the selfish strategy to become profitable (since at the beginning 
it is not as we have proved). We compute the expected duration in this section.

\medskip

Since the goal of the selfish miner  is to decrease the difficulty, he will do it more efficiently  
starting his attack just after a difficulty adjustment so that he can have a maximal impact on the next 
one. So, we consider a miner that starts mining blocks following the selfish mining strategy just after a
difficulty adjustment.

\medskip

With the notations of Section \ref{sec_adj}, 
the time it takes to reach the next difficulty adjustment is $\tilde{S}_{n_0}=n_0 \tau_0 \delta$. 
Then, the attacker will have to wait an additional duration $t$ before his attack becomes profitable. 
We denote by $T_0=\tilde{S}_{n_0}+t$ this (break-even) duration.  

\begin{proposition}
We have $\EE[T_0] = \frac{q'  (\EE[\delta] - 1)}{q' - q} n_0 \tau_0$.
\end{proposition}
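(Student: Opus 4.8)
The plan is to reduce ``becoming profitable'' to a comparison of expected cumulative revenues over a common wall-clock horizon. As established in the subsection on the cost of mining, the instantaneous cost $c(t)$ per unit time is the same whether the miner is selfish or honest, so over any fixed time interval both strategies incur the same expected cost; hence the selfish strategy overtakes ``staying honest forever'' in $PnL$ exactly when its expected cumulative revenue overtakes that of honest mining. I would therefore \emph{define} the break-even time $\EE[T_0]$ by the equality of the two expected revenue curves, and evaluate each side using the revenue ratios already available. Throughout I will use the renewal/strong-law reasoning underlying Theorem \ref{thm_profit}, namely that expected accumulated revenue grows linearly in time at the relevant revenue ratio on each phase.

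First I would treat Phase~1, the interval up to the first difficulty adjustment, of real duration $\tilde{S}_{n_0}$. By the definition of the apparent hashrate $q'$ together with $X_{n_0}$ (equation (\ref{exn})), the expected revenue collected by the selfish miner from the $n_0$ official blocks of this phase is exactly $\EE[X_{n_0}]\,\EE[R(\tau_{SM,\gamma})]=q'\,n_0\,b$. Over the same wall-clock interval the counterfactual honest miner would earn $\Gamma(H)\cdot\EE[\tilde{S}_{n_0}]=\frac{qb}{\tau_0}\cdot n_0\tau_0\EE[\delta]=q\,n_0\,b\,\EE[\delta]$, using $\EE[\tilde{S}_{n_0}]=n_0\tau_0\EE[\delta]$ from Proposition \ref{expd}. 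Thus at the adjustment the selfish miner is behind by $\Delta=n_0 b\,(q\,\EE[\delta]-q')$, which is positive precisely because before the adjustment selfish mining is strictly less profitable than honest mining (the corollary to Theorem \ref{profitsmg}).

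Next I would treat Phase~2, after the adjustment. Here Theorem \ref{profitpost} gives the selfish revenue ratio $\Gamma(SM,\gamma)=\frac{q'b}{\tau_0}$, while the honest miner still earns at rate $\frac{qb}{\tau_0}$. Provided $q'>q$ (the profitability condition of the preceding corollary), the selfish miner closes the gap at rate $\frac{(q'-q)b}{\tau_0}$, so the expected extra time needed is $\EE[t]=\Delta\big/\frac{(q'-q)b}{\tau_0}=\frac{n_0\tau_0\,(q\,\EE[\delta]-q')}{q'-q}$. Adding $\EE[\tilde{S}_{n_0}]=n_0\tau_0\EE[\delta]$ and simplifying the numerator via $\EE[\delta](q'-q)+q\,\EE[\delta]-q'=q'(\EE[\delta]-1)$ yields $\EE[T_0]=\frac{q'(\EE[\delta]-1)}{q'-q}\,n_0\tau_0$, as claimed.

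The closing algebra is routine; the real content, and the step I would be most careful with, is the passage to expectations in the definition of break-even. One must justify that replacing the random $\tilde{S}_{n_0}$ and $\delta$ by their expectations is legitimate at the level of these expected-revenue curves, that Phase~1 contributes the \emph{exact} value $q'n_0 b$ rather than an asymptotic approximation, and that Phase~2 accumulates at the post-adjustment ratio of Theorem \ref{profitpost}. I would also explicitly record the sign conditions $q'>q$ and $q\,\EE[\delta]>q'$, which are exactly what make $\EE[T_0]$ finite and positive.
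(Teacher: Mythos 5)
Your proof is correct and follows essentially the same route as the paper: both set up the break-even equation by equating the selfish miner's expected cumulative revenue (Phase 1 at the pre-adjustment ratio, Phase 2 at the post-adjustment ratio $\frac{q'b}{\tau_0}$ from Theorem \ref{profitpost}) with the honest counterfactual revenue at rate $\frac{qb}{\tau_0}$, and your deficit-and-closing-rate formulation is just a rearrangement of the paper's single equation $\frac{q'}{\EE[\delta]}(n_0\tau_0\EE[\delta]) + q'\EE[t] = q(n_0\tau_0\EE[\delta] + \EE[t])$. Your computation of the Phase-1 revenue as $\EE[X_{n_0}]\,\EE[R(\tau_{SM,\gamma})]=q'n_0b$ via Proposition \ref{qprime}, and your explicit verification of the sign conditions $q'>q$ and $q\,\EE[\delta]>q'$, are welcome clarifications but do not change the argument.
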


\begin{proof}
Let $\Gamma$ (resp. $\Gamma'$) be the revenue ratio before (resp. after) the difficulty adjustment.
We have seen that $\Gamma'=\Gamma . \EE[\delta]$ and $\Gamma'= q' \frac{b}{\tau_0}$ where $q'$ is the long-term apparent hashrate.
At $T_0$, the revenue of the selfish miner is exactly the same as the one he gets mining honestly from the beginning.
So, we have

$$
  \frac{q'}{\EE[\delta]}  (n_0 \tau_0 \EE[\delta]) + q' \EE[t] = q (n_0 \tau_0 \EE[\delta] + \EE[t])
$$

Hence, we get the result. 
\end{proof}

Note that $\lim_{q\to \frac{1}{2}} \frac{q'}{q}=\lim_{q\to \frac{1}{2}}\EE[\delta]=2$. 
So, $\lim_{q\to \frac{1}{2}} \EE[T_0] = 2 n_0 \tau_0$.

Except for $\gamma=1$, $\EE[T_0]$ first decreases with $q$ and then increases again up to $2$. 
For $\gamma=1$, $\EE[T_0]$ is always increasing with $q$.
Thus, for $\gamma$ fixed, there is a unique value $q_{min}<\frac{1}{2}$ such that the time it takes before the attack becomes profitable 
is minimum and for $q=q_{min}$, $\EE[T_0]<2 n_0 \tau_0$. For example, when $\gamma=\frac{1}{2}$, 
we have $\EE[T_0]$ minimum for $q=q_{min} \approx 43\%$ and in this case, $\EE[T_0] \approx 1.7 n_0 \tau_0\approx 23.8 \text{ days}$.
For $q=0.1$ and $\gamma=0.9$, we find that $\EE[T_0]\approx 5 n_0 \tau_0 = 10$ weeks 
and for $q=0.01$ and $\gamma=0.99$, we get $\EE[T_0] > 50 n_0 \tau_0\approx 100$ weeks, so approximately $1$ year and $11$ monthes...

\begin{figure}[!ht]
   \includegraphics[height=6cm, width=9cm]{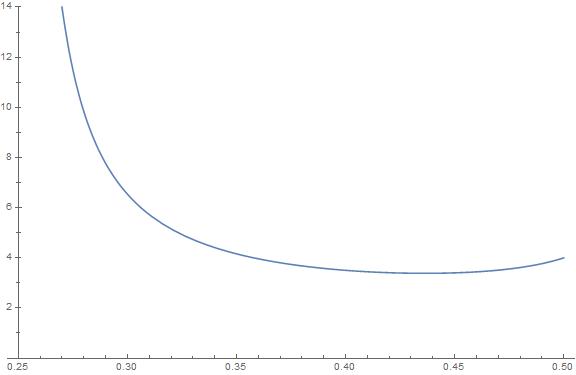}
   \put(5,5){\tiny{$q$}}
   \put(-265,175){\tiny{Weeks}}
   \caption{
   Graph of $\EE[T_0]$ in weeks for $\gamma=\frac{1}{2}$ and $q \in]\frac{1}{4},\frac{1}{2}[$.
   \medskip}
\end{figure}


\section{Pool formation}

In this section, we correct the proof given in Section 5 of \cite{ES14} for the growth of the
selfish miners pool. The situation is the following. There is a
pool of selfish miners with relative hashrate $q$, connectivity $\gamma$ and
apparent hashrate $q'>q$ after a difficulty adjustment. 
These authors claim the
following two facts:
\begin{enumerate}[(i)]
  \item \textit{Rational miners will prefer to join the selfish pool to
  increase their revenues.}\label{arg1}
  \medskip
  \item \textit{Members of the selfish mining pool are ready to accept new members, as this
  would increase their own revenue\label{arg2}}
\end{enumerate}
They give the criteria (Section 5 of \cite{ES14}) that (\ref{arg1}) results from  $q' > q$ 
and that (\ref{arg2}) results from  $\frac{\partial q'}{\partial q} > 1$. 
Note that with their notations, $R_{\text{pool}}$ is $q'$ and what they call ``pool size'' 
is the relative hashrate $q$. We show that both statements \ref{arg1} and \ref{arg2}  are true  
but not for the criteria given for \ref{arg2} is not.

In what follows, consider a modest honest miner H with a relative hashrate
$\varepsilon$ and a subgroup S of selfish miners who represents a fraction
$\lambda$ of the pool of selfish miners, i.e. the relative hashrate of S is
$\lambda q$.

We employ the notation $x'$ for the long-term apparent hashrate of a miner with relative
hashrate $x$ which follows the selfish mining strategy.

\subsection{When a pool of selfish miners is attractive for honest miners.}

On the long term, since the honest miner H represents only
$\frac{\varepsilon}{p}$ of the honest miners, his (long-term) revenue ratio
(i.e. his long-term apparent hashrate) is $\frac{\varepsilon}{p}  (1 - q')$.
Would he join the rogue pool, this quantity would then be equal to
$\frac{\varepsilon}{q + \varepsilon}  (q + \varepsilon)'$. Indeed, in this
case, the pool of selfish miners would have a relative hashrate equal to $q +
\varepsilon$ and the new member would represent $\frac{\varepsilon}{q +
\varepsilon}$ of this pool. So, H is attracted to the
pool of selfish miners if the following condition holds:  
$$
\frac{\varepsilon}{q + \varepsilon}  (q +
\varepsilon)' > \frac{\varepsilon}{p}  (1 - q')
$$ 
When $\varepsilon
\longrightarrow 0$, this leads to:
\begin{equation}
  \frac{q'}{q} > \frac{1 - q'}{1 - q} \label{ob1}
\end{equation}

It turns out that (\ref{ob1}) is equivalent to $q'>q$, and 
therefore, as claimed in \cite{ES14}, a pool of selfish miners
(with $q' > q$) is in principle attractive to honest miners.

\subsection{When a pool of selfish miners is willing to accept new members.}

Miner S is paid proportionally  to the quantity of hash power he brings
to the pool. So, if the relative hashrate of S is $\lambda q$ then his
long-term revenue ratio is $\lambda q'$. If H becomes a selfish miner, then,
the pool of selfish miners will have a total relative hashrate of $q +
\varepsilon$ and S will have now a fraction $\frac{\lambda q}{q +
\varepsilon}$ of this pool hashrate. Then, the new long-term revenue
ratio for S will be $\frac{\lambda q}{q + \varepsilon}  (q +
\varepsilon)'$. So, the condition for S to accept H is $\frac{\lambda q}{q +
\varepsilon}  (q + \varepsilon)' > \lambda q'$ which is equivalent to $\frac{
(q + \varepsilon)' - q'}{\varepsilon} > \frac{q'}{q}$. In the limit when
$\varepsilon \longrightarrow 0$, this gives the condition
\begin{equation}
  \frac{\partial q'}{\partial q} > \frac{q'}{q} \label{ob2}
\end{equation}

We have
$$
\frac{\partial \left( \frac{q'}{q} \right)}{\partial q} = \frac{1}{q}\, \left ( \frac{\partial q'}{\partial q} -\frac{q'}{q} \right )
$$
so condition (\ref{ob2}) is equivalent to $\frac{\partial \left( \frac{q'}{q} \right)}{\partial q} > 0$, 
which means that $\frac{q'}{q}$ must be increasing with $q$. It turns out that
(\ref{ob2}) is always satisfied as it can be checked from  the formula for
$q'$ given in Proposition \ref{qprime}. Therefore, as claimed by \cite{ES14}, a pool of selfish miners (with $q'
> q$) is always willing to accept new members but this follows from 
condition (\ref{ob2}) and not from the weaker condition $\frac{\partial
q'}{\partial q} > 1$.

\medskip

This analysis assumes that the individual miners have full knowledge about the selfish miner strategy. This means that the 
whole network is aware of the rogue behavior. But then it is very unlikely that countermeasures are not adopted. Thus we cannot
draw the type of conclusions than this authors claim in \cite{ES13}.

\begin{figure}[!ht]
   \includegraphics[height=6cm, width=9cm]{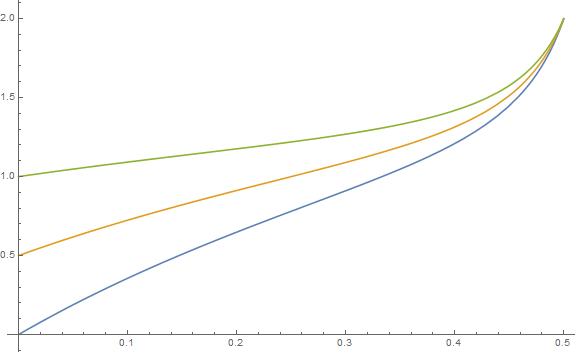}
   \put(-177,38){\tiny{$\gamma=0$}}
   \put(-220,49){\tiny{$\gamma=0.5$}}
   \put(-240,78){\tiny{$\gamma=1$}}
   \caption{
   Graphs of the functions $q'/q$ for $\gamma \in\{ 0, 0.5, 1\}$.
   \medskip}
\end{figure}

\section{A proposition to prevent selfish mining}

\subsection{The origin of the problem}
Basically, the attack exploits the difficulty adjustment law. The protocol underestimates the real 
hashing power in the network since
only the blocks that are in the (official) blockchain are taken into account.
The number of orphan blocks grows in the presence of a selfish miner and a significant amount of honest hashrate is lost. 
The average time used by the network to validate blocks increases. 
After 2016 blocks, a difficulty adjustment is done automatically
ignoring the production of orphan blocks. Despite the fact that the
total hashing power of the network remains the same, the new difficulty
is lower than it should be, and the block validation time decreases. 
So the revenue per unit of time of the selfish miner improves and makes the attack profitable.

\subsection{A new difficulty adjustment formula}

To mitigate this attack, the idea is to incorporate the count 
of orphan blocks in the difficulty adjustment formula. This can be implemented with 
miners indicating the presence of ``uncles' in the blocks they mine by including
their header and peers relaying this data. Only a signaling by honest miners will be enough. 
Nodes would not need to broadcast whole orphan blocks but only their
headers. It is possible to incentive miners to include proofs of existence of uncles 
in their blocks by including a rule that, in case of competition between two blocks 
with the same height, nodes should always broadcast \textit{ the block with the most proof-of-work} i.e., 
the block which includes the most proofs of existence of ''uncles". According to {\cite{PZ17}}, 
this rule would also be profitable to honest miners in case of blocks competition with selfish miners.
At the end of a period of $n_0 = 2016$ blocks validated by the network, the
new formula of difficulty adjustment would be
\begin{eqnarray}
  D_{\text{new}} & = & D_{\text{old}} \cdot \frac{(n_0 + n') \tau_0}{S_{n_0}} 
  \label{f}
\end{eqnarray}
where $n'$ is the total number of orphan blocks mined during this period of
time and $S_{n_0}$ is the time used by the network to validate the $n_0$
blocks (and evaluated with the formula $S_{n_0} = T_{n_0} - T_1$ where $T_i$
denotes the timestamp in the header of block $i$).

\subsection{Analysis of the formula}Let  $\omega$ be the average
number of orphan blocks observed during a period of $\tau_0 = 600$ sec. So, on
average, every $\tau_0$, there are $\omega$ orphan blocks and $(1 - \omega)$
non-orphan blocks. Only the last ones will add to the official blockchain. The time used by the network to grow the blockchain by
$n_0 = 2016$ blocks is then
$S_{n_0} = \frac{n_0 \tau_0}{1 - \omega}$. During this interval, we observe
$n' = \frac{n_0 \omega}{1 - \omega}$ orphan blocks. Thus we have $\frac{(n_0 + n')
\tau_0}{S_{n_0}} = 1$ and Formula (\ref{f}) cannot lead to a fall of difficulty.

\section{Conclusion}
Selfish mining is a trick that slows down the network and reduces the mining difficulty. 
The attack diminishes the profitability of honest miners
and the one of selfish miners before a difficulty adjustment. Selfish mining only becomes profitable after lowering the difficulty. 
Another way to achieve that would be to withdraw 
from the network and start mining another cryptocurrency with the same
hashing function. The existence of other cryptocurrencies with the 
same validation algorithm that allows to switch mining without cost is a vector of attack in itself.
When the attacker withdraws and comes back the mining difficulty will increase again after 2016 blocks 
unless the miner executes a selfish mining strategy. Then after a first difficulty's adjustment, the difficulty mining will stay constant on average. 

\medskip

Selfish mining is an attack on the Bitcoin protocol, but the arguments present in the literature do not properly justify the attack. T
hey lack of a proper analysis of profit and loss per unit of time. To compare the profitability of different mining strategies 
one needs to compute the average length of their cycles and their revenue ratio, that is a new notion introduced in this article. 

\medskip

The attack exploits a flaw in the difficulty adjustment formula. The parameter used to update the mining difficulty is supposed 
to measure the actual hashing power of the network. In the presence of a selfish miner, this is no longer true. 

\medskip

We have proposed a formula that corrects this anomaly by taking into account the production of orphan blocks.
We propose to reinforce the protocol which states that the official blockchain is the chain which contains the 
most proof-of-work by requiring peers to give priority to those containing the most proof-of-work (with ''uncles").

The proposed formula, if adopted, would not eliminate the possibility of selfish mining but it would make it non-profitable compared to honest mining 
even after a difficulty adjustment. So this will keep the individual incentives properly aligned in the protocol rules, as intended in the original 
inception of Bitcoin \cite{N08}.


\medskip

\textbf{Remark 1.}
 There are some selfish miners simulators available where the reader can test numerically the findings 
 in this article (see \cite{BH18} and \cite{MK2019}).
 
 \medskip
 
 \textbf{Remark 2.}
 The new theory developed in this article has been applied to  Stubborn and Trail Mining strategies presented in 
 \cite{NKMS2016}. The authors have solved completely the profitability problem for these strategies 
 in \cite{GPM2018-1} and \cite{GPM2018-2}, where we give close-form formulas. These formulas 
 are used to rigorously compare the profitability 
 of all these strategies and confirm and correct previous numerical studies (\cite{NKMS2016}). 
 Catalan numbers and the Catalan generating function appear naturally in these other problems.


\begin{thebibliography}{1}
  
  \bibitem[1]{BFGMN16}J.~Bonneau, E.~Felten, S.~Goldfeder, A.~Miller , 
  A.~Narayanan. \textit{Bitcoin and Cryptocurrency Technologies:
  A Comprehensive Introduction},  Princeton University Press, NJ,
  USA, 2016.
  
  
%
  
 \bibitem[2]{ES13}I.~Eyal, E.~G.~Sirer. \textit{Bitcoin is broken}, 
 hackingdistributed.com/2013/11/04/bitcoin-is-broken/ (accessed 1/2018), 2013.
  
  \bibitem[3]{ES14}I.~Eyal, E.~G.~Sirer. \textit{Majority is not
  enough: bitcoin mining is vulnerable},
  Int. Conf. Financial Cryptography and Data
  Security, Springer,  p.436-454, 2014.
  
  \bibitem[4]{GPM17}C.~Grunspan  and  R.~P\'erez-Marco. \textit{Double spend
  races}, International Journal of Applied and Theoretical Finance, Vol \textbf{21}, 8, 2018. 
  
  \bibitem[5]{GPM2018-1}C.~Grunspan  and  R.~P\'erez-Marco. \textit{On profitability of Stubborn Mining}, 
  ArXiv:1808.01041, 2018.
  
  \bibitem[6]{GPM2018-2}C.~Grunspan  and  R.~P\'erez-Marco. \textit{On profitability of Trailing Mining}, 
  ArXiv:1811.09322, 2018.
  
  
  \bibitem[7]{BH18}B.~Huisman. \textit{Selfish Mining and Difficulty Adjustments, 
  A Javascript Selfish Mining Simulator}. 
  www.greywyvern.com/code/javascript/selfishmining, 2018.


  \bibitem[8]{MK2019}M.~Khosravi. \textit{Selfish and Stubborn Mining Strategies}, https://armankhosravi.github.io/dirtypool, 2019.

  
  \bibitem[9]{N08}S.~Nakamoto. \textit{Bitcoin: a peer-to-peer electronic
  cash system}. \textit{Bitcoin.org}, 2008.
  
  \bibitem[10]{NKMS2016}K.~Nayak, S.~Kumar, A.~Miller, E.~Shi. \textit{Stubborn Mining: Generalizing 
  Selfish Mining and Combining with an Eclipse Attack}, 2016 IEEE Europ. Symp. on Security and Privacy, 2016.

  \bibitem[11]{R12}S.~Ross. \textit{ Introduction to Probability Models 10th Edition}.
  Academic Press Inc, 2012
  
  \bibitem[12]{SSZ16}A.~Sapirshtein, Y.~Sompolinsky, A.~Zohar, \textit{
  Optimal selfish mining strategies in bitcoin}, International Conference on Financial Cryptography and
  Data Security, Springer, p.515-532, 2016.

  \bibitem[13]{W17}R.~Wattenhofer. \textit{Distributed Ledger
  Technology: The Science of the Blockchain},  2nd Ed.,
  Create Space Independent Publishing Platform, 2017.
  
  \bibitem[14]{PZ17}R.~Zhang, B.~Preneel. \textit{Publish or perish: a
  backward-compatible defense against selfish mining in bitcoin}. In
  \textit{Topics in Cryptology - The Cryptographers Track at the RSA
  Conference 2017}, Springer,  p.277-292, 2017. 
\end{thebibliography}
\end{document}